\newcommand\tuple[1]{\langle #1 \rangle}
\newcommand{\EDF}{\textsc{edf}}
\newcommand{\GEDF}{g\textsc{edf}}
\newcommand{\GLLF}{g\textsc{llf}}
\newcommand{\EDZL}{\textsc{edzl}}
\newcommand{\USEDF}{\textsc{us-edf}}
\newcommand{\RMUS}{\textsc{rm-us}}
\newcommand{\FPEDF}{\textsc{fp-edf}}
\newcommand{\RM}{\textsc{rm}}
\newcommand{\DM}{\textsc{dm}}
\DeclareMathOperator{\dbf}{dbf}
\DeclareMathOperator{\sbf}{sbf}
\DeclareMathOperator{\lsbf}{lsbf}
\DeclareMathOperator{\usbf}{usbf}
\DeclareMathOperator{\GCD}{GCD}
\DeclareMathOperator{\dem}{DEM}
\DeclareMathOperator{\component}{\mathcal{C}}
\DeclareMathOperator{\taskset}{\mathcal{T}}
\DeclareMathOperator{\task}{\tau}
\DeclareMathOperator{\Tperiod}{T}
\DeclareMathOperator{\Tcapacity}{C}
\DeclareMathOperator{\Tdeadline}{D}
\DeclareMathOperator{\Jcapacity}{c}
\DeclareMathOperator{\Rperiod}{\Pi}
\DeclareMathOperator{\Rcapacity}{\Theta}
\DeclareMathOperator{\mpr}{\mu}
\newcommand{\Rprocessors}{m'}
\newcommand\GSTask{\task = (\Tperiod,\Tcapacity,\Tdeadline)}  
\newcommand\STask[1]{\task_{#1} =
  (\Tperiod_{#1},\Tcapacity_{#1},\Tdeadline_{#1})}
\newcommand\GMpr{\mpr =
  \tuple{\Rperiod,\Rcapacity,\Rprocessors}}
\newcommand\Mpr[1]{\mpr_{#1} =
  \tuple{\Rperiod_{#1},\Rcapacity_{#1},\Rprocessors_{#1}}} 
\journalname{Real-Time Systems}
\begin{document}


\title{Optimal Virtual Cluster-based Multiprocessor
  Scheduling\thanks{This research was supported in part by AFOSR
    FA9550-07-1-0216, NSF CNS-0509327, NSF CNS-0720703, ONR MURI
    N00014-07-1-0907, NSF CNS-0721541 and NSF CNS-0720518. This
    research was also supported in part by IT R\&D program of MKE/KEIT
    of Korea [2009-F-039-01], KAIST Institute of Design of Complex
    Systems and KAIST-Microsoft Research Collaboration
    Center.}\thanks{This is an extended version of the ECRTS'08
    paper~\citep{SEL08}.}}
\author{
Arvind Easwaran
\and 
Insik Shin
\and
Insup Lee
}

\institute{Arvind Easwaran \at
Department of CIS, University of Pennsylvania, PA, 19104, USA. \\
\emph{Current affiliation: } CISTER/IPP-HURRAY, Polytechnic Institute
of Porto, Portugal. \\ 
Tel.: +351-22-834-0529, Fax.: +351-22-834-0509,
\email{aen@isep.ipp.pt}
\and
Insik Shin \at
Department of Computer Science, KAIST, Daejeon, Republic of Korea. \\
Tel.: +82-42-350-3524, Fax.: +82-42-350-3510, \email{insik.shin@cs.kaist.ac.kr}
\and
Insup Lee \at
Department of CIS, University of Pennsylvania, PA, 19104, USA. \\
Tel.: +1-215-898-3532, Fax.: +1-215-573-7362, \email{lee@cis.upenn.edu}
}

\date{Received: date / Accepted: date}

\maketitle

  \begin{abstract}
Scheduling of constrained deadline sporadic task systems on 
multiprocessor platforms is an 
area which has received much attention in the recent past. It is
widely believed that finding an optimal scheduler is hard, and
therefore most studies have focused on developing algorithms with
good processor utilization bounds. These algorithms can be broadly
classified into two categories: partitioned scheduling in which tasks
are statically assigned to individual processors, and global
scheduling in which each task is allowed to execute on any processor
in the platform. In this paper we consider a third, more general,
approach called cluster-based scheduling. In this approach each task is
statically assigned to a processor cluster, tasks in each cluster are
globally scheduled among themselves, and clusters in turn are
scheduled on the multiprocessor platform. We develop techniques to
support such cluster-based scheduling algorithms, and also consider
properties that minimize total processor utilization of individual
clusters. In the last part of this paper, we develop new virtual
cluster-based scheduling algorithms. For implicit deadline sporadic
task systems, we develop an optimal scheduling algorithm that
is neither Pfair nor ERfair. We also show that the processor 
utilization bound of \USEDF$\{m/(2m-1)\}$ can be improved by using
virtual clustering. Since neither partitioned nor global strategies
dominate over the other, cluster-based scheduling is a natural
direction for research towards achieving improved processor utilization
bounds.
\keywords{Multiprocessor scheduling \and Virtual processor clustering
  \and Hierarchical scheduling \and Compositional schedulability analysis}
\end{abstract}

\section{Introduction}
\label{sec:mpr:introduction}

With rapid development in microprocessor technology, multiprocessor and
multi-core designs are becoming an attractive solution to fulfill
increasing performance demands. In the real-time systems community,
there has been a growing interest in multiprocessor scheduling
theories. In general, existing
approaches over $m$ identical, unit-capacity processors can fall into
two categories: {\em partitioned} and {\em global} scheduling. Under
partitioned scheduling each task is statically assigned to a single
processor and is allowed to execute on that processor only. Under
global scheduling tasks are allowed to dynamically migrate across $m$
processors and execute on any of them. 

In this paper we consider another approach using a notion of {\em
 processor cluster}. A cluster is a set of $\Rprocessors$
processors, where $1 \leq \Rprocessors \leq m$. Under cluster-based
scheduling, tasks are statically assigned to a cluster and then 
globally scheduled within the cluster. This scheduling strategy can be
viewed as a generalization of partitioned and global scheduling; it is
 equivalent to partitioned scheduling at one extreme end where we
assign tasks to $m$ clusters each of size one, and global scheduling
at the other extreme end where we assign tasks to a single cluster
of size $m$. Cluster-based scheduling can be further classified into
two types: {\em physical} and {\em virtual} depending on how a
cluster is mapped to processors in the platform. A physical cluster
holds a static one-to-one mapping between its $\Rprocessors$ processors and
some $\Rprocessors$ out of $m$ processors in the platform~\citep{CAB07}. A
virtual cluster allows a dynamic one-to-many mapping between its
$\Rprocessors$ processors and the $m$ processors in the
platform. Scheduling tasks in this virtual cluster can be viewed as 
scheduling them globally on all the $m$ processors in the platform
with amount of concurrency at most $\Rprocessors$, \emph{i.e.}, at any
time instant at most $\Rprocessors$ of the $m$ processors are used by
the cluster. A key difference is that physical clusters share no
processors in the platform, while virtual clusters can share some.

\textbf{Motivating example.} We now illustrate the capabilities of
cluster-based scheduling using an example. Consider a sporadic task
system comprised of $6$ tasks as 
follows: $\task_1 = \task_2 = \task_3 = \task_4 = (3,2,3)$, $\task_5 =
(6,4,6)$ and $\task_6 = (6,3,6)$. The notation followed here is
$(\Tperiod,\Tcapacity,\Tdeadline)$, where $\Tperiod$ denotes the minimum
release separation between successive instances of the task,
$\Tcapacity$ denotes the maximum required processor capacity for each
instance and $\Tdeadline$ denotes the relative deadline. Let
this task set be scheduled on a multiprocessor platform comprised of
$4$ processors. It is easy to see that this task set is not
schedulable under any partitioned scheduling algorithm, because no
processor can be allocated more than one
task. Figure~\ref{fig:mpr:example} shows the 
schedule of this task set under global Earliest Deadline First
(\GEDF)~\citep{Liu69}, \EDZL~\citep{CLA02}, Least Laxity First
(\GLLF)~\citep{Leung89}, \FPEDF~\citep{Baruah04} and
\USEDF$\{m/(2m-1)\}$~\citep{SrBa02} scheduling algorithms. As shown in the
figure, the task set is not schedulable under any of these
algorithms. Now consider cluster-based scheduling as follows: tasks 
$\task_1$, $\task_2$ and $\task_3$ are executed under \GLLF\ on a cluster
$\component_1$ comprised of $2$ processors, and
tasks $\task_4$, $\task_5$ and $\task_6$ are executed under \GEDF\ on
another cluster $\component_2$ comprised of $2$ processors. The 
resulting schedule is shown in Figure~\ref{fig:mpr:example}, and as
can be seen all the task deadlines are met.

In addition to being more general than physical clustering, virtual
clustering is also less sensitive to task-processor mappings. This can
be explained using the same example as above with an additional task
$\task_7 = (6,1,6)$. Just for comparison, suppose $\task_7$ is assigned
to the first cluster $\component_1$ along with tasks $\task_1$, $\task_2$ and
$\task_3$. Then physical cluster-based scheduling cannot accommodate
those two clusters on $4$ processors. On the other hand, virtual
clustering has a potential to accommodate them on 4 processors by
dynamically re-allocating slack from cluster $\component_2$ 
to cluster $\component_1$ (time interval $(5,6]$).
\begin{figure}
\centering
\includegraphics[width=0.7\linewidth]{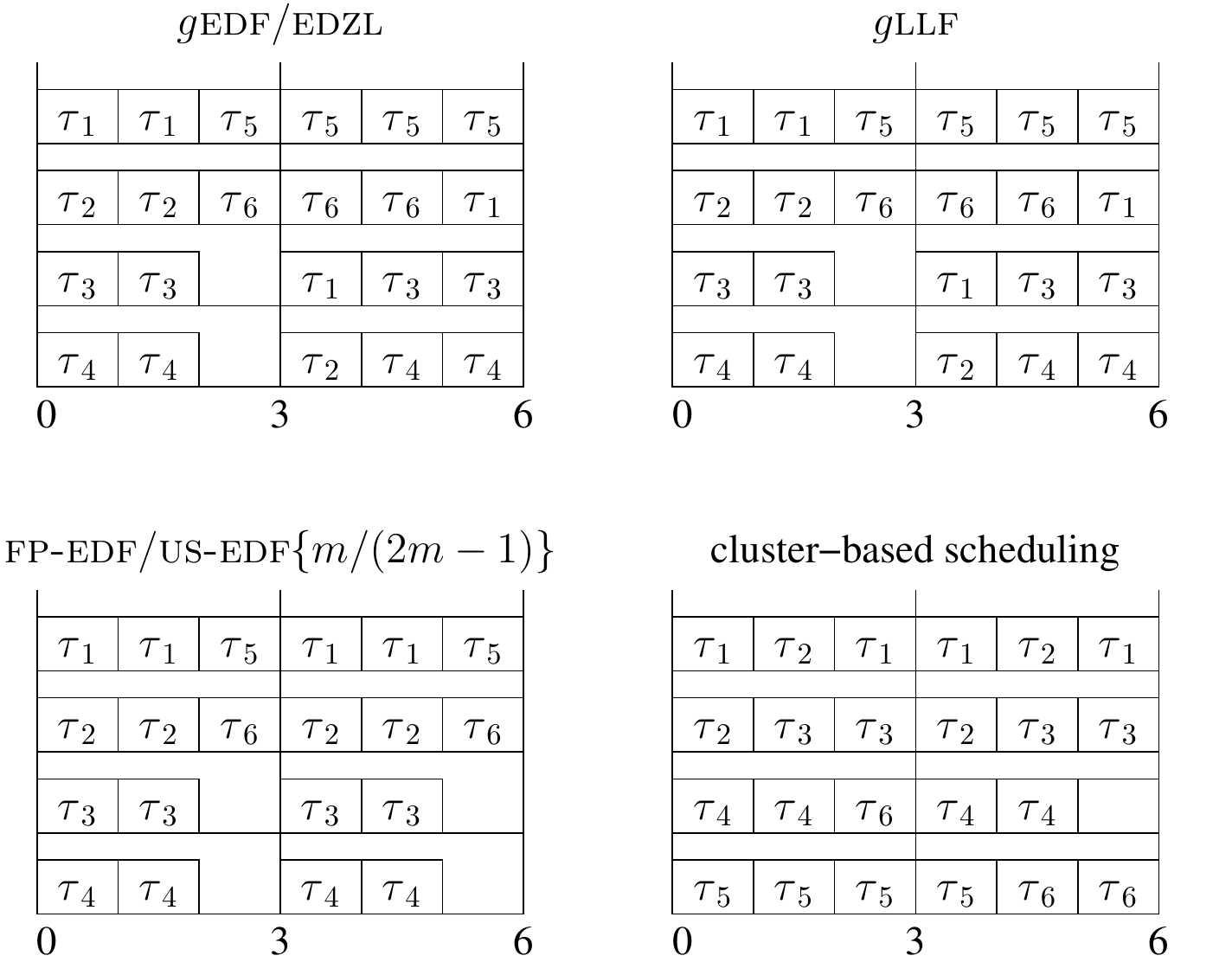}
\caption{Motivating example}
\label{fig:mpr:example}
\end{figure}

Clustering can also be useful as a mechanism to place a restriction
on the amount of concurrency. Suppose $m$ tasks can thrash a L2
cache in a multi-core platform, if they run in parallel at the same
time. Then one may consider allowing at most $\Rprocessors$ of these 
$m$ tasks to run in parallel, in order to prevent them 
from thrashing the L2 cache. This can be easily done if the $m$ tasks
are assigned to a cluster of $\Rprocessors$ processors. A similar idea
was used in~\citep{ACD06}.

\textbf{Hierarchical scheduling.} Physical clustering
requires intra-cluster scheduling only. This is because clusters are
assigned disjoint physical processors, and hence tasks in different clusters
cannot interfere with each others executions. However, the notion of
virtual clustering inherently requires a two-level hierarchical
scheduling framework; inter- and intra-cluster scheduling. In
inter-cluster scheduling physical processors are dynamically 
assigned to virtual clusters. In intra-cluster scheduling processor
allocations given to a virtual cluster are assigned to tasks 
in that cluster. Consider the example shown in
Figure~\ref{fig:mpr:hierarchy_components}. Let a task set be divided 
into three clusters $\component_1$, $\component_2$ and
$\component_3$, each employing \GEDF\ scheduling strategy. If
we use physical clustering, then each cluster can be separately
analyzed using existing techniques for \GEDF. On the other hand
if we use virtual clustering, then in addition to intra-cluster
schedulability analysis, there is a need to develop techniques for
scheduling the clusters on the multiprocessor platform. Therefore, supporting hierarchical
multiprocessor scheduling is cardinal to the successful development of
virtual clustering.
\begin{figure}
\centering
\includegraphics[width=0.6\linewidth]{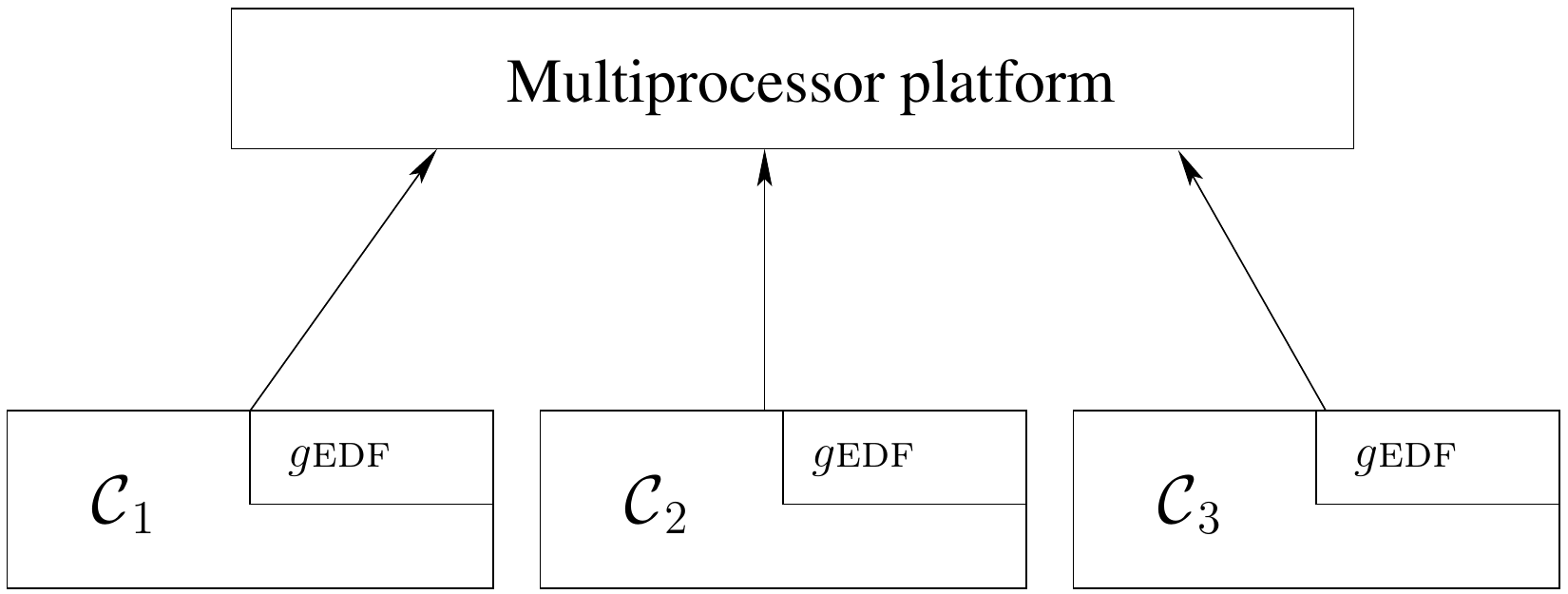}
\caption{Example virtual clustering framework}
\label{fig:mpr:hierarchy_components}
\end{figure}

There have been considerable studies on hierarchical uniprocessor
scheduling. Denoting a collection of tasks and a scheduler as a 
\emph{component}, these studies employed the notion of a component
interface to specify resources required for scheduling the
component's tasks~\citep{MFC01,ShLe03,EAL07}. Analogously, we denote a
cluster along with the tasks and scheduler assigned to it as a
\emph{component} in hierarchical multiprocessor schedulers. To support
inter-cluster scheduling, this paper proposes a component interface
that specifies resources required by the tasks in the component's
cluster. Inter-cluster scheduler can allocate processor supply to the
cluster based on its interface. Intra-cluster scheduler can then use
this processor supply to schedule the tasks in the cluster. Many new
issues arise to adopt the notion of a component interface from
uniprocessor to multiprocessor scheduling. One of them is how to
enable a component interface to carry information about concurrent
execution of tasks in the component. For example, suppose a single
task cannot execute in parallel. Then multiple processors cannot be
used concurrently to satisfy the execution requirement of this single
task. Such an issue needs to be handled for the successful development
of component interfaces. In this paper we present one solution to
this issue. Our approach is to capture in a component's interface, all
the task-level concurrency constraints in that component. The
interface demands enough processor supply from inter-cluster scheduler
so that the intra-cluster scheduler can handle task-level concurrency
constraints. As a result the inter-cluster scheduler does not have to
worry about this issue.

\textbf{Contributions.} The contributions of this paper are five-fold.
First, we introduce the notion of general hierarchical multiprocessor
schedulers to support virtual cluster-based scheduling. Second, we
present an approach to specify the task-level concurrency constraints
in a component's interface. In
Section~\ref{sec:mpr:system models} we introduce a multiprocessor
resource model based interface that not only captures the task-level
concurrency constraints, but also specifies the total resource
requirements of the component. This enables the inter-cluster
scheduler to schedule clusters using their interfaces alone. Third, 
since such interfaces represent \emph{partitioned resource
  supplies}\footnote{If a processor can be used by a cluster only in
  some time intervals and not all, then its supply is said to be
  partitioned.} as opposed to \emph{dedicated resource
  supplies}\footnote{If a processor can be used by a cluster at all
  times, then its supply is said to be dedicated.}, we also extend
existing schedulability conditions for \GEDF\ in this
direction\footnote{We have chosen to focus on one scheduling algorithm 
  in this paper. However the issues are the same for other schedulers,
  and hence techniques developed here are applicable to other
  schedulers as well.} (see
Section~\ref{sec:mpr:component_schedulability_condition}). Such
extensions to schedulability conditions are essential for supporting
development of component interfaces. Fourth, we consider the
optimization problem of minimizing the total resource requirements of
the component interface. In Section~\ref{sec:mpr:interface_generation}, we present
an efficient solution to this problem based on the following
property of our \GEDF\ schedulability condition: total processor
utilization required by a component interface to schedule tasks in the
component increases, as number of processors allocated to the
component's cluster increases. Thus an optimal solution is 
obtained when we find the smallest number of processors that 
guarantee schedulability of the component. Fifth, in
Section~\ref{sec:mpr:improved_virtual_cluster_main} we develop an  
overhead free inter-cluster scheduling framework based on
McNaughton's algorithm~\citep{McN59}. Using this framework we present 
a new algorithm, called \emph{V}irtual
\emph{C}lustering - \emph{I}mplicit \emph{D}eadline \emph{T}asks
(VC-IDT), for scheduling implicit deadline sporadic task systems on
identical, unit-capacity multiprocessor platforms. We show that VC-IDT
is an optimal scheduling algorithm, 
that does not satisfy the property of P-fairness~\citep{BCP96} or
ER-fairness~\citep{AnSr00}. The latter feature of our algorithm, as we
will see in Section~\ref{sec:VC-IDT}, translates into better bounds on the
number of preemptions. As an illustration of the capabilities of
general task-processor mappings supported by virtual clustering, we
also show that the processor utilization bound of \USEDF$\{m/(2m-1)\}$ 
can be improved by using this framework. In our previous
work~\citep{SEL08} we presented the first four contributions listed
above. In this paper we elaborate on (and extend) those contributions, and
in the process develop new virtual cluster-based 
scheduling algorithms (fifth contribution described above).

\section{Task and resource models}
\label{sec:mpr:system models}

In this section we describe our task model and the multiprocessor platform.
We also introduce multiprocessor resource models which we use as
component interfaces.

\subsection{Task and platform models}
\label{sec:mpr:task_model}

\begin{figure}
\centering
\subfigure[Case 1]{
\includegraphics[width=0.85\linewidth]{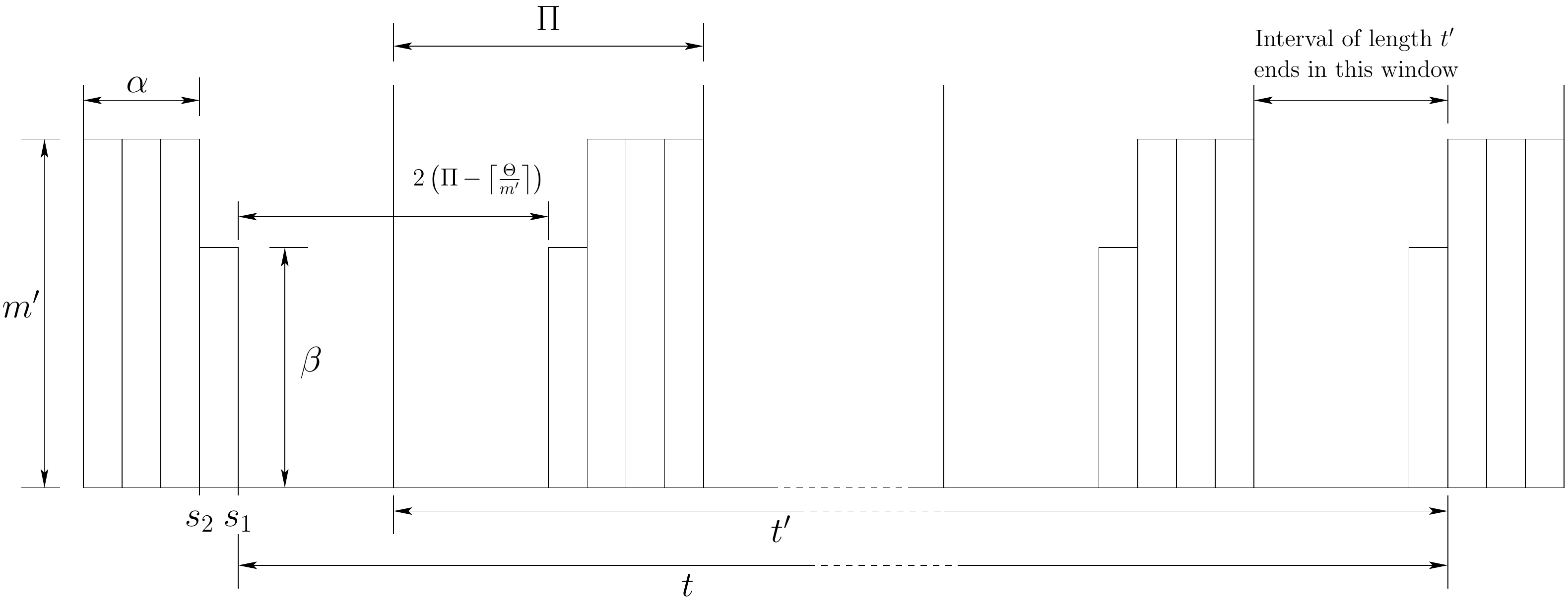}
\label{fig:mpr:sbf_schedule_MPR_0}
}
\subfigure[Case 2]{
\includegraphics[width=0.85\linewidth]{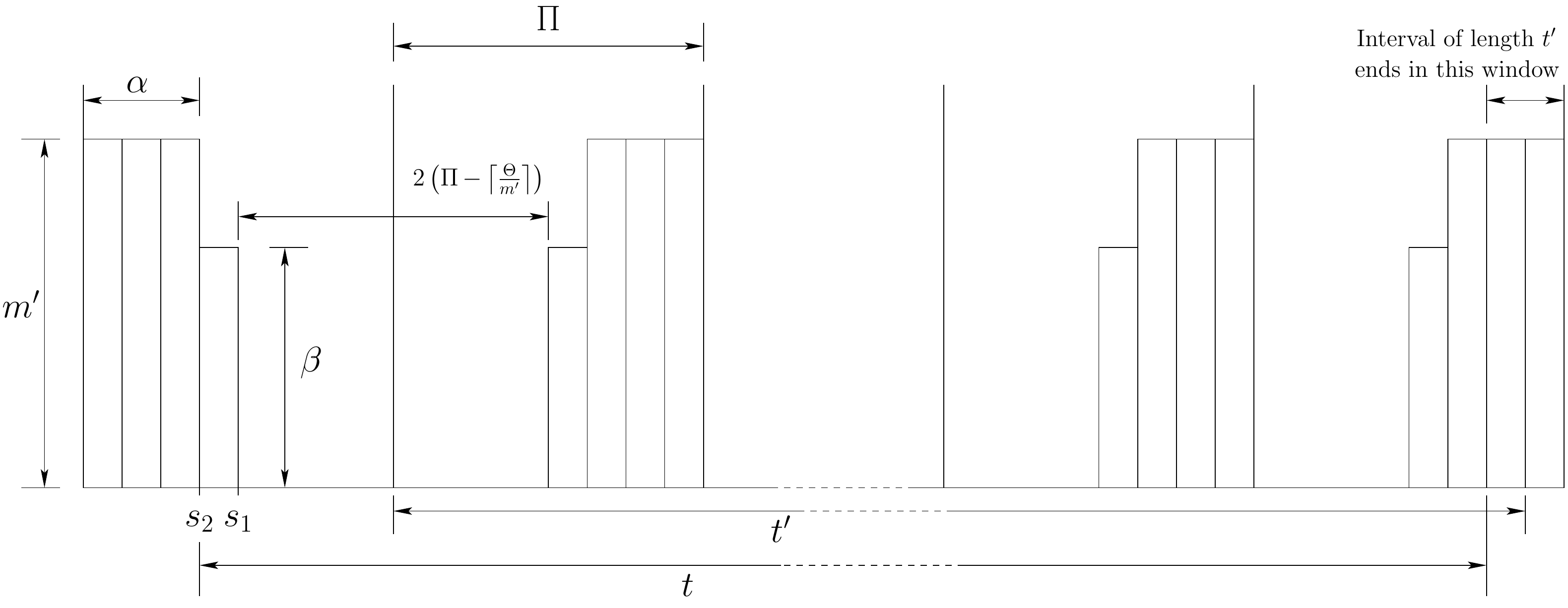}
\label{fig:mpr:sbf_schedule_MPR_1}
}
\caption{Schedule of $\mpr$ w.r.t $\sbf_{\mpr}(t)$}
\label{fig:mpr:sbf_schedule_MPR}
\end{figure}
\textbf{Task model.} We assume a constrained deadline sporadic task
model~\citep{BMR90}. In this model a sporadic task is specified as
$\STask{i}$, where $\Tperiod_i$ is the minimum release separation,
$\Tcapacity_i$ is the maximum processor capacity requirement and
$\Tdeadline_i$ is the relative deadline. These task parameters satisfy
the property $\Tcapacity_i \leq \Tdeadline_i \leq
\Tperiod_i$\footnote{If $\Tdeadline_i = \Tperiod_i$ then the task is 
  called implicit deadline task.}. Successive instances of
$\task_i$ are released with a minimum separation of $\Tperiod_i$ time
units. We refer to each 
such instance as a \emph{real-time job}. Each job of $\task_i$ must
receive $\Tcapacity_i$ units of processor capacity within
$\Tdeadline_i$ time units from its release. These $\Tcapacity_i$ units
must be supplied sequentially to the job. This restriction is
useful in modeling many real-world systems, because in general, all
portions of a software program cannot be parallelized. 

\textbf{Multiprocessor platform and scheduling strategy.} In this
paper we assume an identical, unit-capacity multiprocessor
platform having $m$ processors. Each processor in this platform has a
resource bandwidth of one, \emph{i.e.}, it can provide $t$ units of
processor capacity in every time interval of length $t$. We also
assume that a job 
can be preempted on one processor and may resume execution on another
processor with negligible preemption and migration overheads, as in
the standard literature of global
scheduling~\citep{GFB03,Bak05,BCL05,Baruah07}. We assume such a global
scheduling strategy within each cluster, and in particular, we assume
that the strategy is global \EDF\ (denoted as \GEDF). At
each time instant, if $\Rprocessors$ denotes the number of physical
processors allocated to the cluster, then \GEDF\ schedules unfinished
jobs that have the $\Rprocessors$ earliest relative deadlines.

\subsection{Multiprocessor resource model}
\label{sec:mpr:multi-core_resource_models}

A \emph{resource model} is a model for specifying the characteristics
of processor supply. When these models represent component interfaces,
they specify total processor requirements of the
component. Periodic~\citep{ShLe03}, EDP~\citep{EAL07},
bounded-delay~\citep{FeMo02}, etc., are examples of resource models that 
have been extensively used for analysis of hierarchical uniprocessor 
schedulers. These resource models can also be used as component interfaces 
in hierarchical multiprocessor schedulers. One way to achieve this is
to consider $\Rprocessors$ identical resource models as a component
interface, where $\Rprocessors$ is the number of processors allocated to
the component's cluster. However, this interface is restrictive because
each processor contributes the same amount of resource to the component as any other
processor in the cluster. It is desirable to be more flexible in that
interfaces should be able to represent the collective processor 
requirements of clusters, without fixing the contribution of each processor a
priori. Apart from increased flexibility, such interfaces can also
improve processor utilization in the system. 

We now introduce a \emph{multiprocessor resource} model that
specifies the characteristics of processor supply provided 
by an identical, unit-capacity multiprocessor platform. This resource
model does not fix the contribution of each processor a
priori, and hence is a suitable candidate for cluster interfaces. 
\begin{definition}[Multiprocessor periodic resource model (MPR)]
A \emph{multiprocessor periodic resource} model $\GMpr$ specifies
that an identical, unit-capacity multiprocessor platform collectively  
provides $\Rcapacity$ units of resource in every $\Rperiod$ time
units, where the $\Rcapacity$ units are supplied with concurrency at most
$\Rprocessors$; at any time instant at most $\Rprocessors$ physical
processors are allocated to this resource
model. $\frac{\Rcapacity}{\Rperiod}$ denotes the \emph{resource
  bandwidth} of model $\mpr$.
\end{definition}

It is easy to see from the above definition that a \emph{feasible MPR
  model} must satisfy the condition $\Rcapacity \leq \Rprocessors
\Rperiod$. The supply bound function of a resource model ($\sbf$)
lower bounds the amount of processor supply that the model guarantees
in a given time interval. Specifically, $\sbf_R(t)$ is equal to the
minimum amount of processor capacity that model $R$ is guaranteed to provide in
any time interval of duration $t$. In uniprocessor systems, $\sbf$ is
used in schedulability conditions to generate resource model based
component interfaces. Extending this approach to multiprocessors, in
this paper we derive similar schedulability conditions to generate MPR 
model based component interfaces. Hence we now present the $\sbf$ for
a MPR model $\GMpr$. Figure~\ref{fig:mpr:sbf_schedule_MPR} shows the
schedule for $\mpr$ that generates this minimum supply in a time
interval of duration $t$, where $\alpha = \left \lfloor
\frac{\Rcapacity}{\Rprocessors} \right \rfloor$ and $\beta = \Rcapacity -
\Rprocessors \alpha$. As can be seen, length of the largest time
interval with no supply is equal to $2\Rperiod - 2 \left \lceil
  \frac{\Rcapacity}{\Rprocessors} \right \rceil$ (shown in
the figures). $\sbf_{\mpr}$\footnote{A correction has been made to
  $\sbf_{\mpr}$ from its original publication in~\citep{SEL08}.} is
given by the following equation.
\begin{equation*}
\sbf_{\mpr}(t) = 
\begin{cases}
 0 & t' < 0 \\
 \left \lfloor \frac{t'}{\Rperiod} \right \rfloor  \Rcapacity + \max
 \left \{0, \Rprocessors x - \left (\Rprocessors \Rperiod - \Rcapacity
   \right ) \right \} & t' \geq 0 \mbox{ and } x \in \left [ 1, y \right ] \\ 
 \left \lfloor \frac{t'}{\Rperiod} \right \rfloor  \Rcapacity + \max
 \left \{0, \Rprocessors x - \left (\Rprocessors \Rperiod - \Rcapacity
   \right ) \right \} - (\Rprocessors - \beta) & t' \geq 0 \mbox{ and
 } x \not \in \left [ 1, y \right ]
\end{cases}
\end{equation*}
\begin{equation}
\mbox{ where } t' = t - \left ( \Rperiod - \left \lceil
    \frac{\Rcapacity}{\Rprocessors} \right \rceil \right ) \mbox{, } x = \left (
  t' - \Rperiod \left \lfloor \frac{t'}{\Rperiod} \right \rfloor
\right ) \mbox{ and } y = \Rperiod - \left \lfloor
  \frac{\Rcapacity}{\Rprocessors} \right \rfloor
\label{eqn:mpr:sbf_MPR}
\end{equation}
There are two main cases to consider for $\sbf_{\mpr}$. If $t'$ is as
shown in Figure~\ref{fig:mpr:sbf_schedule_MPR_0}, then the interval that
generates the minimum supply starts from time instant $s_1$ shown in
the same figure. On the other hand, if $t'$ is as shown in
Figure~\ref{fig:mpr:sbf_schedule_MPR_1}, then the interval that
generates the minimum supply starts from time instant $s_2$ shown in
the same figure.  
In uniprocessor systems although schedulability conditions with 
$\sbf$ have been derived, a linear approximation of $\sbf$ is often
used to reduce the time-complexity of the interface generation
process. Hence, in anticipation, we present the following linear lower
bound for $\sbf_{\mpr}$\footnote{$\lsbf_{\mpr}$ has also been modified
  from its original publication in~\citep{SEL08}, in order to be consistent with
  the new $\sbf_{\mpr}$.}. Functions $\sbf_{\mpr}$ and $\lsbf_{\mpr}$
are plotted in Figure~\ref{fig:mpr:sbf_MPR}. 
\begin{align}
 \lsbf_{\mpr}(t) = & \frac{\Rcapacity}{\Rperiod} \left (t - \left [ 2 \left
     (\Rperiod - \frac{\Rcapacity}{\Rprocessors} \right ) + 2
 \right ] \right ) 
\label{eqn:mpr:lsbf_MPR_2}
\end{align}

The following lemma proves that $\lsbf_{\mpr}$ is indeed a lower bound
for $\sbf_{\mpr}$.
\begin{lemma}
$\lsbf_{\mpr}(t) \leq \sbf_{\mpr}(t)$ for all $t \geq 0$.
\end{lemma}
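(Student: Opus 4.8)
The plan is to reduce the claim to a single period and then to finitely many breakpoints. First I would record the two structural facts that drive everything. (i) Both functions increase by exactly $\Rcapacity$ over each period: the substitution $t \mapsto t + \Rperiod$ increases $t'$ by $\Rperiod$, hence increments $\left\lfloor t'/\Rperiod \right\rfloor$ by one (adding $\Rcapacity$) while leaving $x$ and $y$ unchanged, so $\sbf_{\mpr}(t+\Rperiod) = \sbf_{\mpr}(t) + \Rcapacity$; likewise $\lsbf_{\mpr}(t+\Rperiod) = \lsbf_{\mpr}(t) + \frac{\Rcapacity}{\Rperiod}\cdot\Rperiod = \lsbf_{\mpr}(t) + \Rcapacity$. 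Therefore the gap $g(t) := \sbf_{\mpr}(t) - \lsbf_{\mpr}(t)$ is $\Rperiod$-periodic on the region $t' \geq 0$. (ii) $\sbf_{\mpr}$ is non-decreasing and piecewise linear, while $\lsbf_{\mpr}$ is linear, so $g$ is piecewise linear; on each linear segment $g$ is monotone, whence its minimum over any closed interval is attained at a breakpoint. Combining (i) and (ii), it suffices to verify $g \geq 0$ at the finitely many breakpoints of $\sbf_{\mpr}$ lying in one period, together with the initial zero region.

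For the initial region $t' < 0$ we have $\sbf_{\mpr}(t) = 0$, so it suffices that $\lsbf_{\mpr}(t) \leq 0$, i.e. that every such $t$ lies to the left of the $x$-intercept $t_0 = 2\left(\Rperiod - \frac{\Rcapacity}{\Rprocessors}\right) + 2$ of $\lsbf_{\mpr}$. This holds because $t_0$ exceeds the maximal blackout length: $t_0 - \left(2\Rperiod - 2\lceil\frac{\Rcapacity}{\Rprocessors}\rceil\right) = 2\left(\lceil\frac{\Rcapacity}{\Rprocessors}\rceil - \frac{\Rcapacity}{\Rprocessors}\right) + 2 \geq 2 > 0$, while $t' < 0$ means $t < \Rperiod - \lceil\frac{\Rcapacity}{\Rprocessors}\rceil$, which is well inside the blackout and far below $t_0$.

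For one period I would evaluate $g$ at the breakpoints of $\sbf_{\mpr}$: the period start $t' = 0$, the knee $x = \Rperiod - \frac{\Rcapacity}{\Rprocessors}$ at which the $\max$ term turns on, the case boundary $x = y$, and the next period start $t' = \Rperiod$. At each such point $\sbf_{\mpr}$ is given explicitly by Equation~\eqref{eqn:mpr:sbf_MPR}, so the inequality reduces to an elementary comparison between the staircase value and the line value $\frac{\Rcapacity}{\Rperiod}(t - t_0)$. The decisive one is the corner immediately after the maximal blackout, which is precisely the point through which $\lsbf_{\mpr}$ was designed to pass; the slope match from (i) then propagates the inequality across all later periods.

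The main obstacle I expect is the discrete correction in the second branch of $\sbf_{\mpr}$. When $x \notin [1,y]$ the staircase is lowered by $\Rprocessors - \beta$, and since $\beta = \Rcapacity - \Rprocessors\lfloor\Rcapacity/\Rprocessors\rfloor$, this interacts with the difference between $\lfloor\Rcapacity/\Rprocessors\rfloor$ and $\lceil\Rcapacity/\Rprocessors\rceil$ appearing in both $t'$ and $t_0$. One must check that even after this subtraction the line stays below the staircase, which forces a short split on whether $\beta = 0$ (so $\Rcapacity/\Rprocessors$ is integral and the two branches coincide) or $\beta > 0$ (so $\lceil\Rcapacity/\Rprocessors\rceil = \lfloor\Rcapacity/\Rprocessors\rfloor + 1$). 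Carrying these elementary bounds through, each breakpoint value of $g$ is seen to be non-negative, which by the reduction above establishes $\lsbf_{\mpr}(t) \leq \sbf_{\mpr}(t)$ for all $t \geq 0$.
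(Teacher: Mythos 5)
Your skeleton---make the gap $g=\sbf_{\mpr}-\lsbf_{\mpr}$ periodic with period $\Rperiod$ on $t'\geq 0$, dispose of $t'<0$ via the x-intercept, and audit one period---is essentially the paper's own structure: its proof also verifies a single period $(t_4,t_4+\Rperiod]$ and propagates by noting both functions grow by exactly $\Rcapacity$ per period. You differ only in the intra-period step (a formula-driven breakpoint enumeration instead of the paper's figure-anchored argument using $\sbf_{\mpr}(t_4)=2\beta+\epsilon$, the slope comparison $\Rcapacity/\Rperiod\leq\Rprocessors$, and $\sbf_{\mpr}(t_6)=\Rcapacity=\lsbf_{\mpr}(t_8)$ with monotonicity). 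Your treatment of $t'<0$ is correct.

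The genuine gap is the step you defer: ``each breakpoint value of $g$ is seen to be non-negative.'' First, your reduction asserts $\sbf_{\mpr}$ is non-decreasing, but Equation~\eqref{eqn:mpr:sbf_MPR} as published is not: just past the case boundary $x=y$ the second branch subtracts $\Rprocessors-\beta$, so the formula drops discontinuously from $\lfloor t'/\Rperiod\rfloor\Rcapacity+\beta$ at $x=y$ to a right-limit of $\lfloor t'/\Rperiod\rfloor\Rcapacity+2\beta-\Rprocessors$. Since the piece on $(y,\Rperiod)$ has slope $\Rprocessors>\Rcapacity/\Rperiod$, the minimum of $g$ on that piece is exactly this one-sided limit, which your audit must therefore include. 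Second, at that limit the inequality is actually \emph{false} for the literal formula at non-integer times: take $\GMpr$ with $\Rperiod=100$, $\Rcapacity=4$, $\Rprocessors=3$ (so $\alpha=1$, $\beta=1$, $y=99$, $t'=t-98$) and $t=197.1$; then $x=99.1\notin[1,99]$, giving $\sbf_{\mpr}(t)=\max\{0,\,3(99.1)-296\}-2=-0.7$, while $\lsbf_{\mpr}(t)=0.04\,\bigl(197.1-199.33\overline{3}\bigr)\approx-0.089>\sbf_{\mpr}(t)$. (With $\beta=0$, e.g.\ $\Rcapacity=3$, the violation spans nearly the whole slot $(y,y+1)$.) So a faithful execution of your plan does not close; it instead exposes that Equation~\eqref{eqn:mpr:sbf_MPR} dips below its integer-point envelope inside $(y,y+1)$---at integer offsets $x=y+j$ one gets $2\beta+(j-1)\Rprocessors$ above the floor term, and your elementary checks do go through. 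The paper's proof never meets this obstacle because it reasons from the worst-case schedule of Figure~\ref{fig:mpr:sbf_MPR} (a genuinely non-decreasing staircase that agrees with the formula at the points $t_4,t_5,t_6,t_8$ it uses), in effect proving the lemma for that staircase, equivalently for integer $t$. To complete your argument you must either restrict the breakpoint audit to integer arguments or replace the literal formula by the schedule-derived supply function; as stated over all real $t\geq 0$, the final verification fails at precisely the second-branch correction you flagged as the main obstacle.
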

\begin{proof}
Consider Figure~\ref{fig:mpr:sbf_MPR}. Observe that $\lsbf_{\mpr}(t) =
0$ for all $t \leq t_4$. Therefore it is sufficient to show that
$\lsbf_{\mpr}(t) \leq \sbf_{\mpr}(t)$ for all $t > t_4$. Suppose
$\sbf_{\mpr}(t_4) = 2\beta + \epsilon$ for some $\epsilon \geq
0$. 

We now show that $\lsbf_{\mpr}(t) \leq \sbf_{\mpr}(t)$ for all $t$
such that $t_4 < t \leq t_8$, where $t_8 = t_4 + \Rperiod$. The
following statements are true by definition: 1) $\sbf_{\mpr}(t_8) =
\Rcapacity + 2\beta + \epsilon$, and 2) $\lsbf_{\mpr}(t_8) =
\Rcapacity$. Further, because the slope of $\sbf_{\mpr}$ in the
interval $(t_4, t_5]$ is at least as much as the slope of
$\lsbf_{\mpr}$ ($\frac{\Rcapacity}{\Rperiod} \leq \Rprocessors$),
$\lsbf_{\mpr}(t) \leq \sbf_{\mpr}(t)$ for all $t$ 
such that $t_4 < t \leq t_5$. From the figure, we can see that
$\sbf_{\mpr}(t_6)=\Rcapacity=\lsbf_{\mpr}(t_8)$ and $t_6 \leq
t_8$. Therefore $\lsbf_{\mpr}(t) \leq \sbf_{\mpr}(t)$ for all $t$ such 
that $t_6 \leq t \leq t_8$. The last statement follows from the fact
that $\sbf_{\mpr}$ is a non-decreasing function. This combined with
the facts that $t_6=t_5+1$ and $\lsbf_{\mpr}$ is a linear function,
implies $\lsbf_{\mpr}(t) \leq \sbf_{\mpr}(t)$ for all $t$ such that
$t_4 < t \leq t_8$.

Observe that in every successive time interval of length $\Rperiod$
starting from $t_4$, the following holds: 1) both $\sbf_{\mpr}$ and
$\lsbf_{\mpr}$ increase by exactly $\Rcapacity$, and 2) they both have
slope characteristics identical to those in the interval $(t_4,
t_8]$. Therefore the arguments from the previous paragraph hold for each
such time interval of length $\Rperiod$. The result of the lemma then
follows. 
\qed
\end{proof}

Uniprocessor resource models, such as periodic or EDP,
allow a view that a component executes over an exclusive share of a physical
uniprocessor platform. Extending this notion, MPR models
allow a view that a component, and hence the corresponding cluster, 
executes over an exclusive share of a physical multiprocessor
platform. Although this view guarantees a minimum total
processor share given by $\sbf$, it does not enforce any distribution
of this share over the processors in the platform, apart from the 
concurrency bound $\Rprocessors$. In this regard MPR models are general and
hence our candidate for component interfaces.
\begin{figure}
\centering
\includegraphics[width=\linewidth]{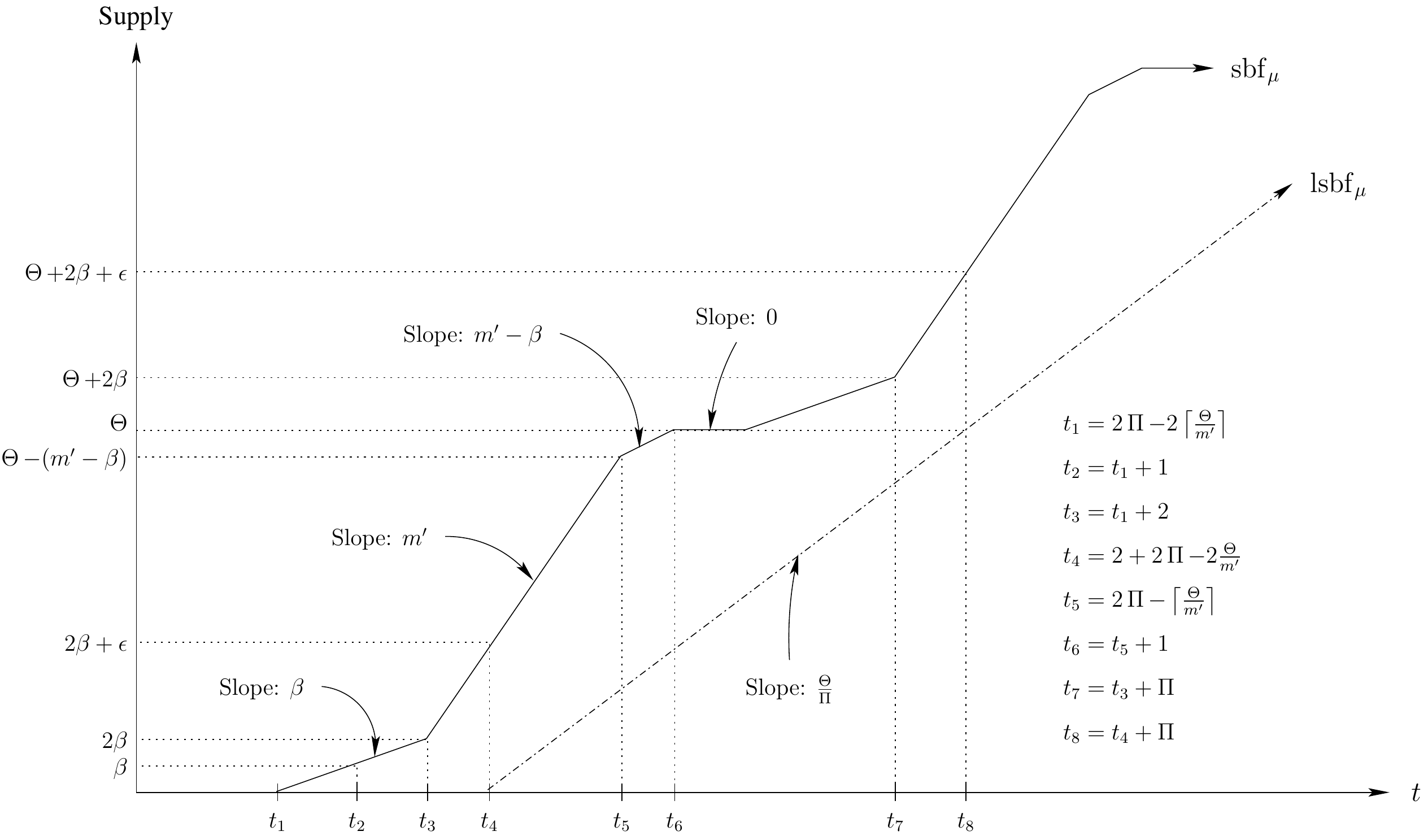}
\caption{$\sbf_{\mpr}$ and $\lsbf_{\mpr}$}
\label{fig:mpr:sbf_MPR}
\end{figure}


\section{Related work}
\label{sec:mpr:related}

\textbf{Multiprocessor scheduling.} In general, studies on real-time
multiprocessor scheduling theory can fall into two categories: {\em
  partitioned} and {\em global} scheduling. Under partitioned
scheduling each task is statically assigned to a single processor 
and uniprocessor scheduling algorithms are used to schedule
tasks. Under global scheduling tasks are allowed to migrate across
processors and algorithms that simultaneously schedule on all the
processors are used. Many partitioning algorithms and their
analysis~\citep{OhBa98,LDG01,BaFi06b,FBB06}, and global scheduling
algorithms and their
analysis~\citep{BCP96,ABJ01,CLA02,SrBa02,ZMM03,GFB03,Bak03,Baruah04,Bak05,Bak06,BCL05,CRJ06,Baruah07,CiBa07,BeCi07,BaFi07,BaBa08,BaBa08a,FKY08}, 
have been proposed in the past.

For implicit deadline task systems, both Earliest Deadline First (\EDF)~\citep{LDG01} and
Rate Monotonic (\RM)~\citep{OhBa98} based partitioned scheduling have been proposed
along with processor utilization bounds. These studies have since been
extended for constrained deadline task systems, and
\EDF~\citep{BaFi06b} and fixed-priority~\citep{FBB06} based scheduling
have been developed for them. Under global scheduling of implicit
deadline task systems, several optimal algorithms such as
Pfair~\citep{BCP96}, BoundaryFair~\citep{ZMM03}, LNREF~\citep{CRJ06}, and
NVNLF~\citep{FKY08}, have been proposed. To reduce the relatively high
preemptions in these algorithms and to support constrained deadline task
systems, processor utilization bounds and worst-case response time
analysis for
\EDF~\citep{GFB03,Bak03,Bak05,BCL05,Baruah07,BeCi07,BaBa08,BaBa08a} and
Deadline Monotonic (\DM)~\citep{Bak03,Bak06,BeCi07,BaFi07} based global scheduling strategies have
been developed. Towards better processor utilization, new global
algorithms such as dynamic-priority
\EDZL~\citep{CLA02,CiBa07} and \USEDF$\{m/(2m-1)\}$~\citep{SrBa02}, and
fixed-priority \RMUS$\{m/(3m-2)\}$~\citep{ABJ01} and
\FPEDF~\citep{Baruah04}, have also been proposed. Partitioned
scheduling suffers from an inherent performance limitation in that a
task may fail to be assigned to any processor, although the total
available processor capacity across the platform is larger than the
task's requirements. Global scheduling has been developed to overcome
this limitation. However global algorithms are either not known to
utilize processors optimally (like in the case of constrained deadline
task systems), or if they are known to be optimal, then they have high
number of preemptions (like in the case of implicit deadline task
systems). Moreover, for constrained deadline tasks, simulations conducted by
Baker~\citep{Bak05a} have shown that partitioned scheduling performs
much better than global scheduling on an average. These simulations
reflect the large pessimism in current schedulability tests for global
algorithms. To eliminate the performance limitation of partitioned
scheduling and to achieve high processor utilization without incurring
high preemption costs, we consider the more general task-processor
mappings that virtual cluster-based scheduling proposes. 

Algorithms that support slightly more general task-processor mappings
than either partitioned or global scheduling have been proposed in the
past. Andersson \emph{et al.}~\citep{AnTo06,AnBl08,BBA08} and Kato and
Yamasaki~\citep{KaYa07} have developed algorithms that allow a task to
be scheduled on at most two processors in the platform. Virtual
cluster-based scheduling framework that we propose generalizes all
these task-processor mappings and therefore can lead to higher
processor utilization. Baruah and Carpenter~\citep{Baca03} introduced an
approach that restricts processor migration of jobs, in order to
alleviate the performance limitation of partitioned scheduling and the
processor migration overheads of global scheduling. It has been shown
that the worst-case processor utilization of this approach is no better
than partitioned scheduling (roughly 50\%). Calandrino {\em et
  al.}~\citep{CAB07} presented a physical clustering framework in which
tasks are first assigned to physical processor clusters and then 
scheduled globally within those clusters. They experimentally
evaluated this framework to show that cache-access related overheads
can be reduced in comparison to both partitioned and global scheduling
strategies. Virtual clustering is again a generalization of this
framework, and moreover, unlike their work, we develop efficient
schedulability analysis techniques with a focus on achieving high
processor utilization. Recently, virtual clustering has also been
considered in the context of tardiness guarantees for soft real-time
systems~\citep{LeAn08}.

Moir and Ramamurthy~\citep{MoRa99} and Anderson \emph{et
  al.}~\citep{HoAn01,ACD06} presented an approach that upper bounds
the amount of concurrent execution within a group of tasks. They
developed their approach using a two-level Pfair-based scheduling
hierarchy. These studies are most related to our work on virtual
clustering, but they differ from our technique mainly in the following
aspect. We introduce a multiprocessor resource model that makes it
possible to clearly separate intra- and inter-cluster scheduling. This
allows development of schedulability analysis techniques for virtual
clustering that are easily extensible to many different
schedulers. However their approaches do not employ such a
notion. Therefore their analysis techniques are bound to Pfair 
scheduling, and do not generalize to other algorithms and task
models such as the one considered in this paper. This flexibility
provides a powerful tool for the development of various task-processor
mappings and intra- and inter-cluster scheduling algorithms.

\textbf{Hierarchical scheduling.} For uniprocessor platforms there
has been a growing attention to 
hierarchical scheduling frameworks. Since a two-level framework was
introduced~\citep{DeLi97}, its schedulability has been analyzed under 
fixed-priority~\citep{KuLi99} and \EDF-based~\citep{LCB00}
scheduling. For multi-level frameworks many resource model based
component interfaces such as bounded-delay~\citep{MFC01,ShLe04},
periodic~\citep{LiBi03,ShLe03,ShLe08} and EDP~\citep{EAL07}, have 
been introduced, and schedulability conditions have been derived under
fixed-priority and \EDF\ 
scheduling~\citep{FeMo02,LiBi03,ShLe03,AlPe04,DaBu05,EAL07}. As
discussed in the introduction, these studies do not provide any
technique to capture task-level concurrency constraints in interfaces,
and therefore are not well suited for virtual clustering.

\section{Component schedulability condition}
\label{sec:mpr:component_schedulability_condition}

In this section we develop a schedulability condition for
components in hierarchical multiprocessor schedulers, such that this
condition accommodates the notion of a partitioned resource supply.
Specifically, we extend existing \GEDF\ schedulability
conditions for dedicated resource, with the supply bound function of
a MPR model. Any MPR model that satisfies this condition can be
used as an interface for the component.

We consider a component comprising of cluster $\component$ and 
sporadic tasks $\taskset = \{ \STask{1}, \ldots , \STask{n} \}$
scheduled under \GEDF. To keep the presentation simple, we use
notation $\component$ to refer to the component as well. We now
develop a schedulability condition for $\component$ assuming it is 
scheduled using MPR model $\GMpr$, where $\Rprocessors$ denotes number
of processors in the cluster. This condition uses the total processor
demand of task set $\taskset$ for a given time interval. Existing 
studies~\citep{BCL05} have developed an upper bound for this demand
which we can use. Only upper bounds are known for this demand, because
unlike the synchronous arrival sequence in uniprocessors, no notion of
worst-case arrival sequence is known for
multiprocessors~\citep{Baruah07}. Hence we first summarize this  
existing demand upper bound and then present our schedulability condition. 

\subsection{Component demand}
\label{sec:mpr:higher_priority_interference}

\textbf{Workload.} The workload of a task $\task_i$ in an interval
$[a,b]$ gives the cumulative length of all intervals in which 
$\task_i$ is executing, when task set $\taskset$ is scheduled under
$\component$'s scheduler. This workload consists of three parts
(illustrated in Figure~\ref{fig:mpr:higher_priority_interference}):
(1) the \emph{carry-in} demand generated by a job of $\task_i$ that is
released prior to $a$, but did not finish its execution requirements until
$a$, (2) the demand of a set of jobs of $\task_i$ that are both
released and have their deadlines within the interval, and (3) the
\emph{carry-out} demand generated by a job of $\task_i$ that is
released in the interval $[a,b)$, but does not finish its execution
requirements until $b$.
\begin{figure}
\centering
\includegraphics[width=0.85\linewidth]{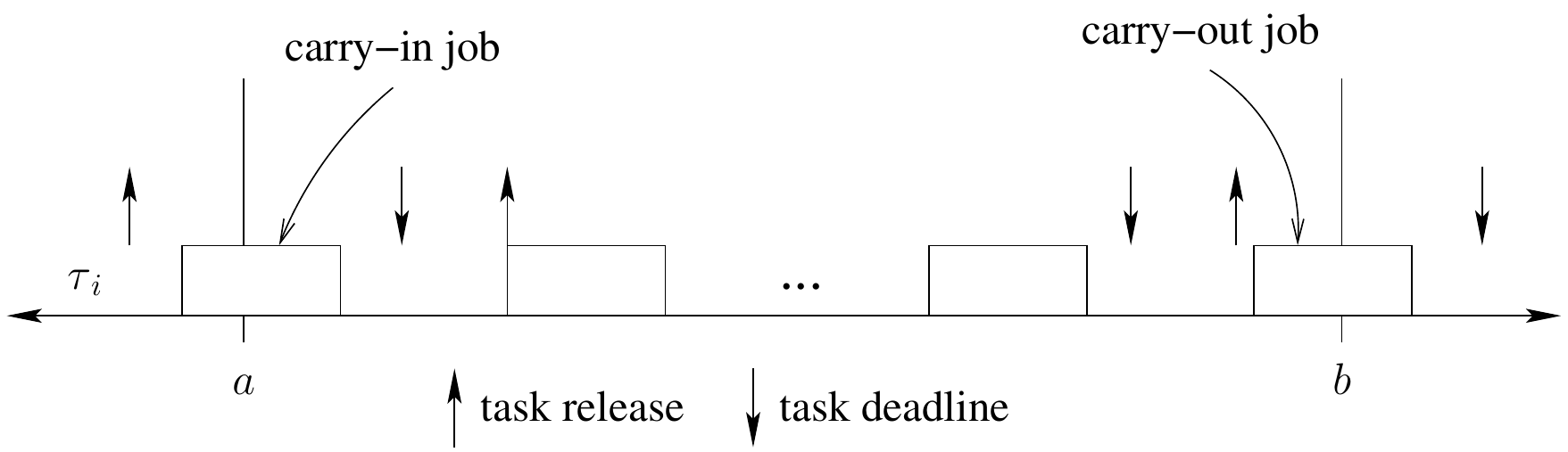}
\caption{Workload of task $\task_i$ in interval $[a,b]$}
\label{fig:mpr:higher_priority_interference}
\end{figure}

\textbf{Workload upper bound for $\task_i$ under gEDF.} If workload
in an interval $[a,b]$ can be efficiently computed for all $a,b \geq
0$ and for all tasks $\task_i$, then we can obtain the exact demand
of task set $\taskset$ in all intervals. However, since no such efficient
computation technique is known (apart from task set simulation), we
use an upper bound for this workload obtained by Bertogna
\emph{et al.}~\citep{BCL05}. This bound is obtained under two
assumptions: (1) some job of some task $\task_k$ has a deadline at
time instant $b$, and (2) this job of $\task_k$ misses its
deadline. In the schedulability conditions we develop, these
assumptions hold for all time instants $b$ that are considered. Hence
this is a useful bound and we present it
here. Figure~\ref{fig:mpr:sched-proof-edf} illustrates the dispatch
pattern corresponding to this bound. A job of task $\task_i$ has a
deadline that coincides with time instant $b$. Jobs of $\task_i$ that
are released prior to time $b$ are assumed to be released as
late as possible. Also, the job of $\task_i$ that is released before
$a$ but has a deadline in the interval $[a,b]$, is
assumed to execute as late as possible. This imposes maximum
possible interference on the job of $\task_k$ with deadline
at $b$. Let $\mathcal{W}_i(t)$ denote this workload bound for $\task_i$
in a time interval of length $t$ ($= b - a$). Also let
$CI_i(t)$ denote the carry-in demand generated by the
execution pattern shown in Figure~\ref{fig:mpr:sched-proof-edf}. Then
\begin{equation*}
\mathcal{W}_i(t) = \left \lfloor \frac{t + (\Tperiod_i -
    \Tdeadline_i)}{\Tperiod_i} \right \rfloor C_i + CI_i(t),
\end{equation*}
\begin{equation}
\mbox{
  where } CI_i(t) = \min \left \{ \Tcapacity_i, \max \left \{ 0, t - \left
      \lfloor \frac{t + (\Tperiod_i - \Tdeadline_i)}{\Tperiod_i}
    \right \rfloor \Tperiod_i \right \} \right \}
\label{eqn:mpr:sched-proof-edf}
\end{equation}

 It has been shown that the actual workload of $\task_i$ can never
 exceed $\mathcal{W}_i(b-a)$ in the interval $[a,b]$, provided tasks
 are scheduled under \GEDF\ and a deadline miss occurs for that job of
 $\task_k$ whose 
 deadline is at $b$~\citep{BCL05}. This follows from the observation that
 no job of $\task_i$ with deadline greater than $b$ can execute in
 the interval $[a,b]$. In the following section we develop a
 schedulability condition for $\component$ using this workload bound.
\begin{figure}
\centering
\includegraphics[width=0.85\linewidth]{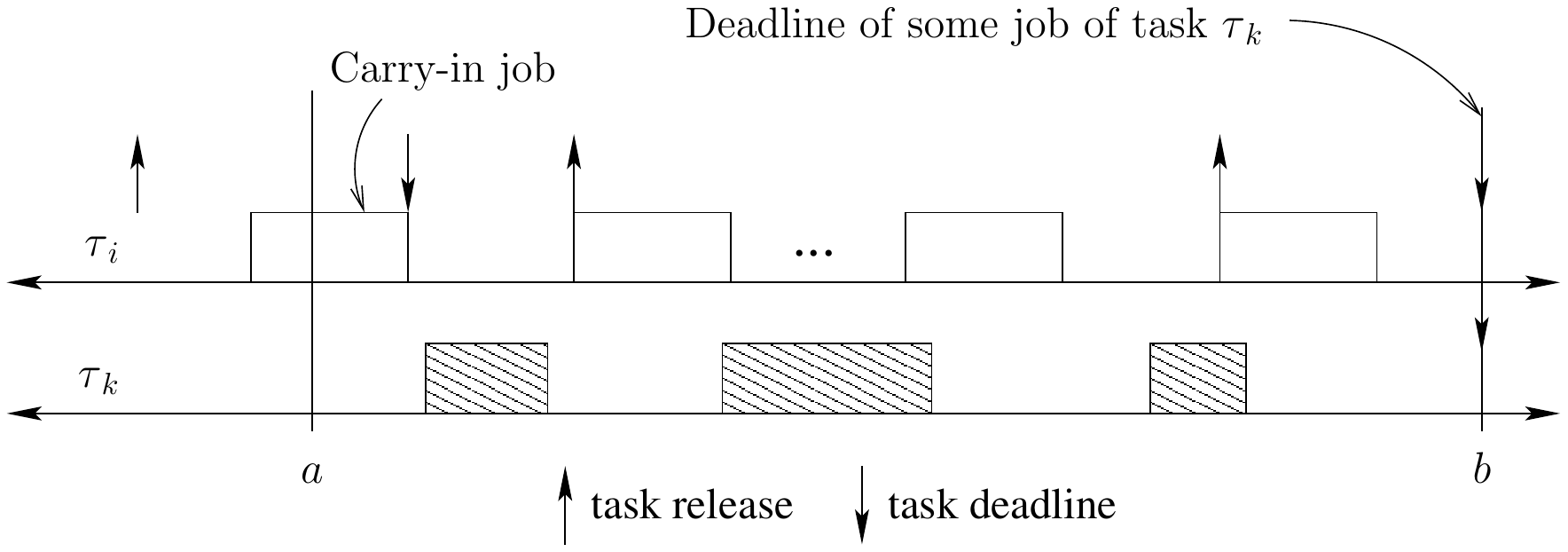}
\caption{Dispatch and execution pattern of task $\task_i$ for $\mathcal{W}_i(b-a)$}
\label{fig:mpr:sched-proof-edf}
\end{figure}

\subsection{Schedulability condition}
\label{sec:mpr:supply_interface_generation}

We now present a schedulability condition for component $\component$
when it is scheduled using MPR model $\GMpr$. For this purpose we
extend (with the notion of $\sbf_{\mpr}$) an existing condition that
checks the schedulability of $\component$ on a dedicated resource
comprised of $\Rprocessors$ unit-capacity processors.

\begin{figure}
\centering
\includegraphics[width=0.9\linewidth]{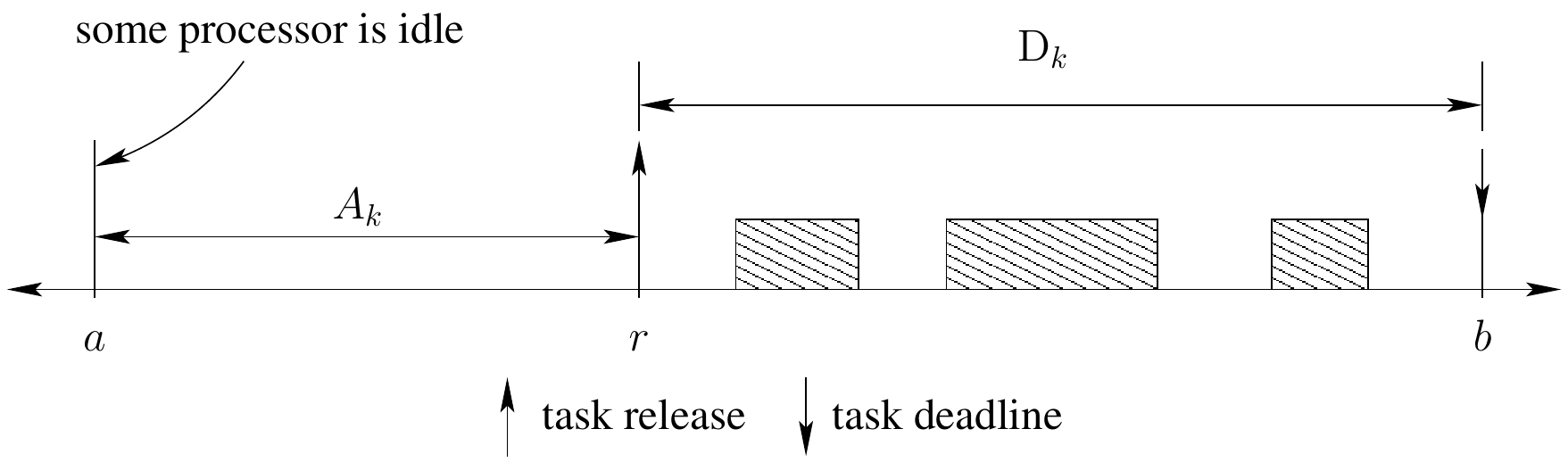}
\caption{Example time instant $a$ under dedicated resource}
\label{fig:mpr:baruah_schedulability} 
\end{figure}
When task $\task_k$ is scheduled on $\Rprocessors$ unit-capacity
processors under \GEDF, existing work identifies different time intervals that
must be checked to guarantee schedulability of
$\task_k$~\citep{Baruah07}. In particular, it assumes  
$b$ denotes the missed deadline of some job of task $\task_k$
(henceforth denoted as job $\task_k^b$), and then specifies different
values of $a$, corresponding to the interval $[a,b]$, that need to be
considered. Figure~\ref{fig:mpr:baruah_schedulability} gives one such 
time instant $a$. It corresponds to a point in time
such that: (1) at least one of the $\Rprocessors$ processors is idle
at that instant, (2) it is prior to the release time 
of job $\task_k^b$ ($r$ in the figure), and (3) no processor is idle
in the interval $(a,r]$. Observe that at each such time instant $a$ there
can be at most $\Rprocessors-1$ tasks that contribute towards carry-in
demand. This is because at most $\Rprocessors-1$ processors are
executing jobs at $a$. This observation is used to develop an
efficient schedulability condition in the dedicated resource
case. Informally, the study derives a condition on the total higher priority
workload in the interval $[a,b]$ that guarantees a deadline miss for
$\task_k^b$. In the following discussion we extend this notion of
time instant $a$ for the case when $\task_k$ is scheduled under the
partitioned resource supply $\mpr$.

When task $\task_k$ is scheduled using $\mpr$, we
denote a time instant as $t_{\mbox{idle}}$ if at least one of the $\Rprocessors$
processors is idle at that instant, even though it is available for
use as per supply $\mpr$. Figure~\ref{fig:mpr:idle_instant} illustrates one
such time instant, where $r$ denotes the release time of job
$\task_k^b$, $A_k$ denotes the length of the interval $(a,r]$ and $A_k +
D_k$ denotes the length of the interval $(a,b]$. 
To check schedulability of task $\task_k$ we consider all time instants $a$
such that: (1) $a$ is $t_{\mbox{idle}}$, (2) $a \leq r$, and (3) no time
instant in the interval $(a,r]$ is $t_{\mbox{idle}}$. The time instant
illustrated in Figure~\ref{fig:mpr:idle_instant} satisfies these
properties.
\begin{figure}
\centering
\includegraphics[width=0.9\linewidth]{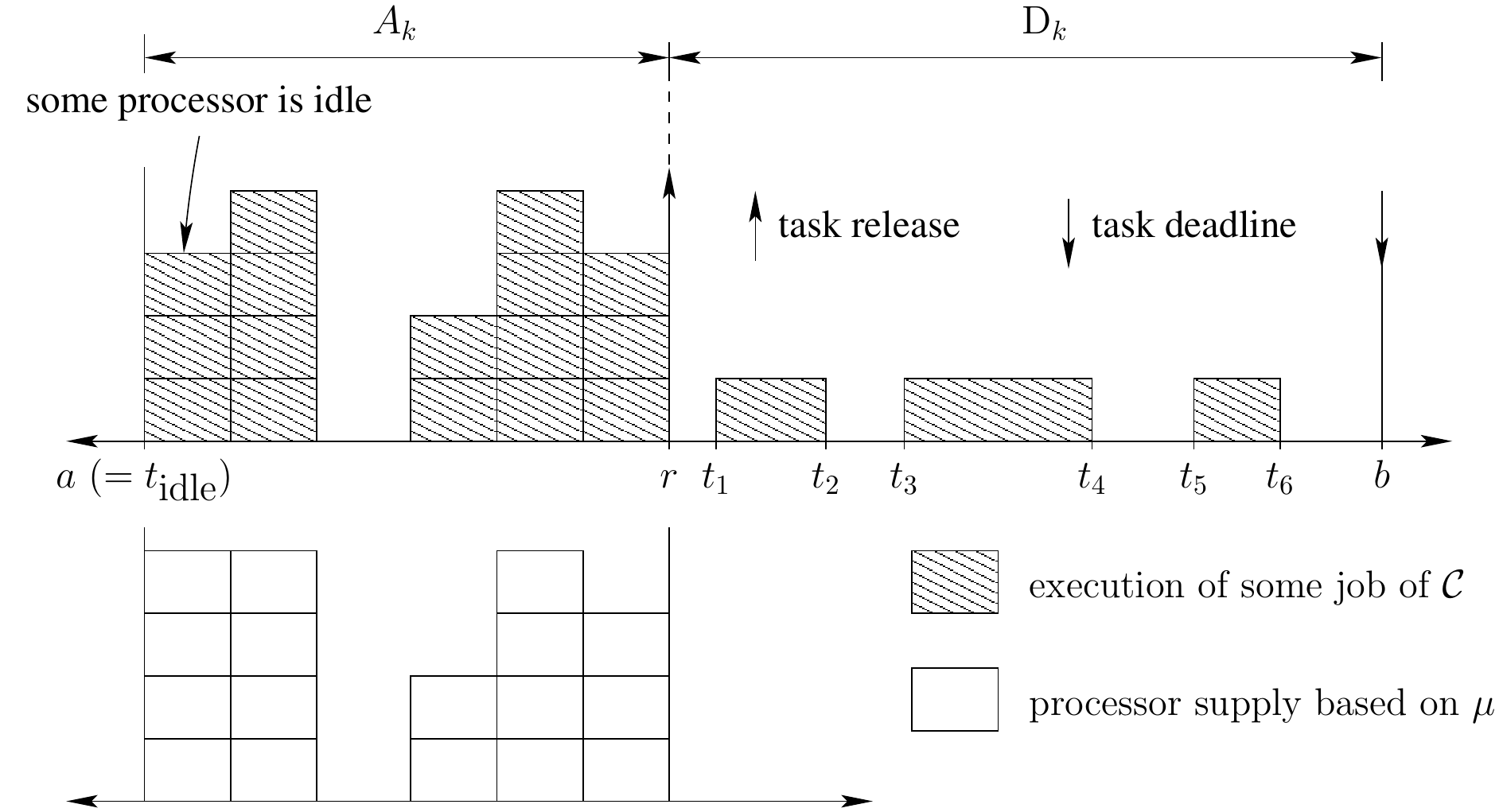}
\caption{Example time instant $t_{\mbox{idle}}$} 
\label{fig:mpr:idle_instant}
\end{figure}

To derive the schedulability condition for component $\component$, we
consider all intervals $[a,b]$ as explained above and derive
conditions under which a deadline miss 
occurs for job $\task_k^b$. If $\task_k^b$ misses its deadline, then
the total workload of jobs having priority at least $\task_k^b$ must
be greater than the total processor supply available to $\component$
in $[a,b]$. Let $I_i$ ($1 \leq i \leq n$) denote the total
workload in interval $[a,b]$ of jobs of $\task_i$ that have priority
at least $\task_k^b$. Since $\sbf_{\mpr}(b-a)$ denotes a lower
bound on the processor supply available to $\component$ in $[a,b]$,
whenever $\task_k^b$ misses its deadline it must be true that 
\begin{equation}
\sum_{i = 1}^n I_i > \sbf_{\mpr}(b-a) =
\sbf_{\mpr}(A_k+\Tdeadline_k) \label{eqn:mpr:deadline_miss} 
\end{equation}
This inequality can be derived from the following observations: (1) the actual
processor supply available to component $\component$ 
in $[a,b]$ is at least $\sbf_{\mpr}(A_k+\Tdeadline_k)$ and
(2) there are no $t_{\mbox{idle}}$ time instants in the interval $(a,b]$,
\emph{i.e.}, all available processor supply is used by $\component$ to
schedule tasks from $\taskset$. For $\component$ to be schedulable using 
$\mpr$, it then suffices to show that for all tasks $\task_k$ and for
all values of $A_k$ Equation~\eqref{eqn:mpr:deadline_miss} is invalid.

We now derive an upper bound for each workload $I_i$. We separately
consider the workload of $\task_i$ in the following two interval
classes: (1) time intervals in $[a,b]$ in which $\task_k^b$
executes (intervals $[t_1,t_2], [t_3,t_4]$ and $[t_5,t_6]$ in
Figure~\ref{fig:mpr:idle_instant}) and (2) the other time intervals
in $[a,b]$. Let $I_{i,1}$ denote the workload of $\task_i$ in
intervals of type~(1) and $I_{i,2}$ denote the workload of $\task_i$ in
intervals of type~(2). We bound $I_i$ using upper bounds for $I_{i,1}$
and $I_{i,2}$. In the dedicated resource case, only intervals of type~(2)
were considered when deriving the schedulability
condition~\citep{Baruah07}. We however consider the contiguous interval
$[a,b]$, because $\sbf$ of MPR models are only defined over such
contiguous time intervals.

Since the cumulative length of intervals of type~(1) is at most $\Tcapacity_k$
and there are at most $\Rprocessors$ processors on which $\component$
executes, the total workload of all the tasks in intervals of 
type~(1) is clearly upper bounded by $\Rprocessors
\Tcapacity_k$. Therefore, $\sum_{i=1}^n I_{i,1} \leq \Rprocessors
\Tcapacity_k$. To bound 
$I_{i,2}$ we use the workload upper bound $\mathcal{W}_i$ presented in
Section~\ref{sec:mpr:higher_priority_interference}. Recall that
$\mathcal{W}_i(b-a)$ ($= \mathcal{W}_i(A_k+\Tdeadline_k)$) upper
bounds the workload of all jobs of $\task_i$ that execute in the
interval $[a,b]$ and have priority higher than $\task_k^b$. Therefore  
$\mathcal{W}_i(A_k+D_k)$ also upper bounds $I_{i,2}$. Further,
there is no need for $I_{i,2}$ to be larger than
$A_k+\Tdeadline_k-\Tcapacity_k$, because we have already considered a
total length of $\Tcapacity_k$ for intervals of type (1). Also this
bound can be further tightened for $i = k$, because in $I_{k,2}$ we
do not consider the executions of $\task_k^b$. These executions
are already considered for intervals of type~(1). Thus we can subtract
$\Tcapacity_k$ from $\mathcal{W}_k(A_k+\Tdeadline_k)$ and $I_{k,2}$
cannot be greater than $A_k$.
\begin{align*}
I_{i,2} \leq \bar{I}_{i,2} = \min \{ \mathcal{W}_i(A_k+\Tdeadline_k),
A_k+\Tdeadline_k-\Tcapacity_k \} & \mbox{ for all } i \not = k \\
I_{k,2} \leq \bar{I}_{k,2} = \min \{
\mathcal{W}_k(A_k+\Tdeadline_k)-\Tcapacity_k,A_k 
\} &
\end{align*}
Now by definition of time instant $a$ at most $\Rprocessors-1$ tasks can be
active, and hence have carry-in demand, at $a$. This follows from the
fact that at least one processor is not being used by 
$\component$ at $a$ even though that processor is
available as per supply $\mpr$. Hence we only need to
consider $\Rprocessors-1$ largest values of $CI_i$ when computing an
upper bound for $\sum_{i=1}^n I_{i,2}$ using the above equations, where
$CI_i$ denotes the carry-in demand in $\mathcal{W}_i$. Let us now
define the following two terms.
\begin{align*}
\hat{I}_{i,2} = \min \{ \mathcal{W}_i(A_k+\Tdeadline_k) -
 CI_i(A_k+\Tdeadline_k), A_k+\Tdeadline_k-\Tcapacity_k \} & \mbox{ for
   all } i \not = k \\
\hat{I}_{k,2} = \min \{ \mathcal{W}_k(A_k+\Tdeadline_k)-\Tcapacity_k -
CI_k(A_k+\Tdeadline_k), A_k \} &
\end{align*}
Let $L_{(\Rprocessors-1)}$ denote a set of task indices such that if $i
\in L_{(\Rprocessors-1)}$, then $(\bar{I}_{i,2} - \hat{I}_{i,2})$ is one
of the $\Rprocessors - 1$ largest values among all tasks. Then an 
upper bound on the worst-case resource demand in the interval $(a,b]$ can
be defined as, 
\begin{equation*}
\dem(A_k+\Tdeadline_k,\Rprocessors) =  \Rprocessors \Tcapacity_k + \sum_{i = 1}^n \hat{I}_{i,2} +
\sum_{i: i \in L_{(\Rprocessors-1)}} (\bar{I}_{i,2} - \hat{I}_{i,2}).
\end{equation*}
The following theorem gives our schedulability condition and its
proof follows from the above discussions.
\begin{theorem}
A component comprising of cluster $\component$ with $\Rprocessors$ processors
and sporadic tasks $\taskset = \{ \STask{1}, \ldots , \STask{n} \}$ 
is schedulable under \GEDF\ using MPR model $\GMpr$, if for all tasks
$\task_k \in \taskset$ and all $A_k \geq 0$,
\begin{equation}
\dem(A_k+\Tdeadline_k,\Rprocessors) \leq \sbf_{\mpr}(A_k+\Tdeadline_k). 
\label{eqn:mpr:EDF schedulability}
\end{equation}
\label{thm:mpr:EDF schedulability}
\end{theorem}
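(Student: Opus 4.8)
The plan is to prove the theorem by contraposition: I will assume the component $\component$ is \emph{not} schedulable under \GEDF\ using the MPR model $\mpr$, and show that this forces Equation~\eqref{eqn:mpr:EDF schedulability} to be violated for some task $\task_k$ and some $A_k \geq 0$. Concretely, if $\component$ is unschedulable then some job $\task_k^b$ of some task $\task_k$ misses its deadline at a time instant $b$. Starting from this missed deadline, I would locate the time instant $a$ exactly as described in the text preceding the statement: scanning backwards from the release time $r$ of $\task_k^b$, let $a$ be the latest $t_{\mbox{idle}}$ instant with $a \leq r$ such that no instant in $(a,r]$ is $t_{\mbox{idle}}$. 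Setting $A_k$ to be the length of $(a,r]$ gives the interval $[a,b]$ of length $A_k + \Tdeadline_k$ that I will analyze.

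The core of the argument is then to establish the two bounding facts the excerpt has already set up. First, since $a$ is the latest $t_{\mbox{idle}}$ instant before $r$ and none occurs in $(a,b]$, every unit of processor supply that $\mpr$ makes available in $(a,b]$ is actually consumed by $\component$; hence the total workload of jobs with priority at least $\task_k^b$ strictly exceeds the guaranteed supply $\sbf_{\mpr}(A_k+\Tdeadline_k)$, which is Equation~\eqref{eqn:mpr:deadline_miss}. Second, I would upper-bound that same total workload by $\dem(A_k+\Tdeadline_k,\Rprocessors)$. This second step is essentially a bookkeeping argument: split each $I_i$ into the contribution $I_{i,1}$ during the intervals in which $\task_k^b$ itself executes and the contribution $I_{i,2}$ elsewhere; bound $\sum_i I_{i,1} \leq \Rprocessors \Tcapacity_k$ since those intervals have total length at most $\Tcapacity_k$ across at most $\Rprocessors$ processors; and bound each $I_{i,2}$ by $\bar{I}_{i,2}$ using the Bertogna \emph{et al.} workload bound $\mathcal{W}_i$ from Equation~\eqref{eqn:mpr:sched-proof-edf}. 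Chaining $\sbf_{\mpr}(A_k+\Tdeadline_k) < \sum_i I_i \leq \dem(A_k+\Tdeadline_k,\Rprocessors)$ then contradicts the theorem's hypothesis, completing the contrapositive.

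\textbf{The main obstacle} will be justifying the carry-in reduction embodied in the $L_{(\Rprocessors-1)}$ term, i.e.\ that at most $\Rprocessors - 1$ tasks contribute carry-in demand at $a$. The subtlety is that this property, which holds transparently in the dedicated-resource case of~\citep{Baruah07} because ``idle processor'' has an unambiguous meaning, must here be re-established for the \emph{partitioned} supply $\mpr$ using the definition of $t_{\mbox{idle}}$: at $a$ at least one of the $\Rprocessors$ processors available per $\mpr$ sits idle, so at most $\Rprocessors - 1$ jobs are executing across $(a,r]$ boundary and can carry work into the interval. I would argue that replacing the full carry-in $CI_i$ by its contribution only on the $\Rprocessors - 1$ tasks with the largest marginal term $(\bar{I}_{i,2} - \hat{I}_{i,2})$ still yields a valid upper bound on $\sum_i I_{i,2}$, since $\hat{I}_{i,2}$ already strips the carry-in from every task and the $L_{(\Rprocessors-1)}$ sum adds it back only where it can legitimately occur. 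The remaining delicate point is the tighter treatment of the index $i = k$: because the executions of $\task_k^b$ are already counted in $I_{k,1}$, I must subtract $\Tcapacity_k$ from $\mathcal{W}_k$ and cap $I_{k,2}$ at $A_k$ rather than $A_k + \Tdeadline_k - \Tcapacity_k$, which is exactly the definition of $\bar{I}_{k,2}$ and $\hat{I}_{k,2}$ given above. Once these bounds are assembled the inequality follows directly, so the entire difficulty is concentrated in correctly transferring the dedicated-resource counting argument to the partitioned-supply setting.
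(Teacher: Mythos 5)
Your proposal is correct and follows essentially the same route as the paper: the paper's proof of Theorem~\ref{thm:mpr:EDF schedulability} is precisely the discussion preceding it, i.e.\ assuming a deadline miss for job $\task_k^b$, selecting the latest $t_{\mbox{idle}}$ instant $a$ before the release of $\task_k^b$, deriving Equation~\eqref{eqn:mpr:deadline_miss}, and upper-bounding the workload by $\dem(A_k+\Tdeadline_k,\Rprocessors)$ via the $I_{i,1}$/$I_{i,2}$ split, the carry-in restriction to $\Rprocessors-1$ tasks, and the tightened bound for $i=k$. Your contrapositive framing and your identification of the carry-in counting under the partitioned supply $\mpr$ as the delicate step match the paper's argument exactly.
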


In Theorem~\ref{thm:mpr:EDF schedulability} if we set $\Rcapacity =
\Rprocessors \Rperiod$, then we get the schedulability condition under
dedicated resource that was proposed earlier~\citep{Baruah07}. This
shows that our condition is no more pessimistic than the one under
dedicated resource. Although this theorem gives a schedulability
test for component $\component$, it would be highly inefficient if
we were required to check for all values of $A_k$. The following
theorem shows that this is not the case. 
\begin{theorem}
If Equation~\eqref{eqn:mpr:EDF schedulability} is violated for some $A_k$,
then it must also be violated for a value satisfying the condition
\begin{equation*}
A_k < \frac{\Tcapacity_{\Sigma} + \Rprocessors \Tcapacity_k -
  \Tdeadline_k \left (\frac{\Rcapacity}{\Rperiod}-U_{\taskset} 
  \right) + U + B}{\frac{\Rcapacity}{\Rperiod} - U_{\taskset}},
\end{equation*}
where $\Tcapacity_{\Sigma}$ denotes the sum of $\Rprocessors-1$
largest $\Tcapacity_i$'s, $U_{\taskset} = \sum_{i = 1}^n
\frac{\Tcapacity_i}{\Tperiod_i}$, $U = \sum_{i = 1}^n
(\Tperiod_i-\Tdeadline_i) \frac{\Tcapacity_i}{\Tperiod_i}$ and $B =
\frac{\Rcapacity}{\Rperiod} \left [2 + 2\left (\Rperiod
    -\frac{\Rcapacity}{\Rprocessors} \right ) \right ]$.
\label{thm:mpr:Ak_bound}
\end{theorem}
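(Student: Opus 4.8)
The plan is to bound $A_k$ by exploiting the fact that a violation of Equation~\eqref{eqn:mpr:EDF schedulability} requires $\dem(A_k+\Tdeadline_k,\Rprocessors) > \sbf_{\mpr}(A_k+\Tdeadline_k)$, and then replacing each side by a tractable bound: an \emph{upper} bound on $\dem$ that grows linearly in $A_k$, and the \emph{lower} bound $\lsbf_{\mpr}$ (which is also linear in $A_k$) on $\sbf_{\mpr}$. If $A_k$ is large enough that the linear demand upper bound already sits below $\lsbf_{\mpr}(A_k+\Tdeadline_k) \le \sbf_{\mpr}(A_k+\Tdeadline_k)$, then no violation is possible there, so any violating $A_k$ must lie below the crossover point. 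Solving for that crossover gives the stated bound.

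First I would bound $\dem$ from above. The term $\Rprocessors\Tcapacity_k$ is constant in $A_k$. For the sum $\sum_i \hat I_{i,2}$ plus the $\Rprocessors-1$ carry-in corrections, I would drop all the $\min$'s and use the larger branch, giving $\sum_{i=1}^n \mathcal{W}_i(A_k+\Tdeadline_k) \le \sum_i \left\lfloor \frac{A_k+\Tdeadline_k+(\Tperiod_i-\Tdeadline_i)}{\Tperiod_i}\right\rfloor \Tcapacity_i + \sum_i CI_i$, bounded by $\sum_i \frac{A_k+\Tdeadline_k+(\Tperiod_i-\Tdeadline_i)}{\Tperiod_i}\Tcapacity_i + \Tcapacity_{\Sigma}$, where the carry-in sum over the $\Rprocessors-1$ selected tasks is at most $\Tcapacity_{\Sigma}$ (the sum of the $\Rprocessors-1$ largest capacities). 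Collecting terms, the $A_k$-coefficient is $U_{\taskset}=\sum_i \Tcapacity_i/\Tperiod_i$, the constant picks up $\Tdeadline_k U_{\taskset}$ and $U=\sum_i(\Tperiod_i-\Tdeadline_i)\Tcapacity_i/\Tperiod_i$. So $\dem(A_k+\Tdeadline_k,\Rprocessors) \le (A_k+\Tdeadline_k)U_{\taskset} + U + \Rprocessors\Tcapacity_k + \Tcapacity_{\Sigma}$.

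Next I would substitute $\lsbf_{\mpr}(A_k+\Tdeadline_k) = \frac{\Rcapacity}{\Rperiod}\!\left((A_k+\Tdeadline_k) - \left[2(\Rperiod-\frac{\Rcapacity}{\Rprocessors})+2\right]\right)$ from Equation~\eqref{eqn:mpr:lsbf_MPR_2}, recognizing the bracketed supply-loss term as exactly $B/\tfrac{\Rcapacity}{\Rperiod}$ so that $\frac{\Rcapacity}{\Rperiod}(A_k+\Tdeadline_k)-B$ is the lower bound. Since by Lemma~1 $\lsbf_{\mpr}\le\sbf_{\mpr}$, a violation forces the demand upper bound to exceed $\lsbf_{\mpr}$. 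Setting the upper bound on $\dem$ strictly greater than $\frac{\Rcapacity}{\Rperiod}(A_k+\Tdeadline_k)-B$ and solving the resulting linear inequality for $A_k$ yields exactly the claimed bound; the slope difference $\frac{\Rcapacity}{\Rperiod}-U_{\taskset}$ appears as the denominator, and collecting the constants $\Tcapacity_{\Sigma}$, $\Rprocessors\Tcapacity_k$, $U$, $B$ and the $\Tdeadline_k(\frac{\Rcapacity}{\Rperiod}-U_{\taskset})$ cross-term gives the numerator.

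The main obstacle I anticipate is twofold. First, the inequality only makes sense when $\frac{\Rcapacity}{\Rperiod} - U_{\taskset} > 0$, i.e.\ the model's bandwidth strictly exceeds the total task utilization; I would note this is a necessary feasibility precondition (otherwise the system is trivially unschedulable and no finite bound is claimed). Second, the carry-in accounting is delicate: I must verify that the sum of carry-in contributions over the selected $\Rprocessors-1$ tasks (the $\bar I_{i,2}-\hat I_{i,2}$ corrections) is genuinely bounded by $\Tcapacity_{\Sigma}$ rather than by a larger quantity, since each correction is a \emph{difference} of two $\min$ expressions and need not equal the full $CI_i$. Checking that this difference is at most $\Tcapacity_i$ per task, and that selecting the $\Rprocessors-1$ largest keeps the total within the sum of the $\Rprocessors-1$ largest capacities, is the one place where the bookkeeping could go wrong, so I would treat it carefully before claiming the clean constant $\Tcapacity_{\Sigma}$.
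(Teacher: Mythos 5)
Your proposal is correct and follows essentially the same route as the paper's own proof: both bound the demand side above by $\Tcapacity_{\Sigma} + \Rprocessors\Tcapacity_k + \sum_{i=1}^n \dbf_{\task_i}(A_k+\Tdeadline_k) \leq (A_k+\Tdeadline_k)U_{\taskset} + U + \Rprocessors\Tcapacity_k + \Tcapacity_{\Sigma}$, replace $\sbf_{\mpr}$ with the linear lower bound $\lsbf_{\mpr}(t) = \frac{\Rcapacity}{\Rperiod}t - B$, and solve the resulting linear inequality for $A_k$. The two caveats you flag are real but resolve exactly as you hope: the precondition $\frac{\Rcapacity}{\Rperiod} > U_{\taskset}$ is acknowledged by the paper when it claims pseudo-polynomiality, and each correction satisfies $\bar{I}_{i,2} - \hat{I}_{i,2} \leq CI_i \leq \Tcapacity_i$ (a difference of two $\min$ expressions shifted by $CI_i$), which is precisely the paper's implicit use of $\bar{I}_{i,2} \leq \dbf_{\task_i}(A_k+\Tdeadline_k) + \Tcapacity_i$ and $\hat{I}_{i,2} \leq \dbf_{\task_i}(A_k+\Tdeadline_k)$.
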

\begin{proof}
It is easy to see that $\hat{I}_{i,2} \leq \dbf_{\task_i}(A_k+\Tdeadline_k)$ and
$\bar{I}_{i,2} \leq \dbf_{\task_i}(A_k+\Tdeadline_k) + \Tcapacity_i$,
where $\dbf_{\task_i}(t) = \left \lfloor \frac{t+T_i-D_i}{T_i}\right \rfloor
C_i$. Then the left hand side of Equation~\eqref{eqn:mpr:EDF
  schedulability} is less than or equal to
$\Tcapacity_{\Sigma}+\Rprocessors \Tcapacity_k+\sum_{i=1}^n
\dbf_{\task_i}(A_k+\Tdeadline_k)$. For this equation to be violated it
must be true that
\begin{align*}
& \Tcapacity_{\Sigma}+\Rprocessors \Tcapacity_k+\sum_{i=1}^n
\dbf_{\task_i}(A_k+\Tdeadline_k) > \sbf_{\mpr}(A_k+\Tdeadline_k) \\
\Rightarrow & (\mbox{Using } \dbf_{\task_i} \mbox{bound from~\citep{BMR90}}) \\
& \Tcapacity_{\Sigma}+\Rprocessors \Tcapacity_k+(A_k+\Tdeadline_k)U_{\taskset}+U >
\sbf_{\mpr}(A_k+\Tdeadline_k) \\ 
\Rightarrow & (\mbox{From Equation~\eqref{eqn:mpr:lsbf_MPR_2}}) \\
& \Tcapacity_{\Sigma}+\Rprocessors \Tcapacity_k+(A_k+\Tdeadline_k)U_{\taskset}+U>
\frac{\Rcapacity}{\Rperiod} \left (A_k+\Tdeadline_k \right )- B \\
\Rightarrow & (\mbox{Rearranging}) \\
& A_k < \frac{\Tcapacity_{\Sigma} + m \Tcapacity_k - \Tdeadline_k\left
    (\frac{\Rcapacity}{\Rperiod}-U_{\taskset} \right) + U +
  B}{\frac{\Rcapacity}{\Rperiod} - U_{\taskset}}
\end{align*}
\qed
\end{proof}

It can also be observed that Equation~\eqref{eqn:mpr:EDF schedulability}
only needs to be evaluated at those values of $A_k$ for which at
least one of $\hat{I}_{i,2},\bar{I}_{i,2}$ or $\sbf_{\mpr}$
change. Therefore Theorem~\ref{thm:mpr:EDF schedulability} gives a
pseudo-polynomial time schedulability condition whenever utilization
$U_{\taskset}$ is strictly less than the resource bandwidth
$\frac{\Rcapacity}{\Rperiod}$. In our techniques described later we
compute minimum possible $\Rcapacity$ and minimum required concurrency
$\Rprocessors$ for a given value of $\Rperiod$. Since $\Rcapacity$
appears inside floor and ceiling functions in $\sbf_{\mpr}$, these
computations may be intractable. We therefore replace $\sbf_{\mpr}$ in
Theorem~\ref{thm:mpr:EDF schedulability} with $\lsbf_{\mpr}$ from
Equation~\eqref{eqn:mpr:lsbf_MPR_2} before using it to
generate MPR interfaces.

\textbf{Discussion.} We have only focused on one intra-cluster
scheduling algorithm in this paper. However our
analysis technique can be easily extended to other intra-cluster
scheduling algorithms. Specifically, in the schedulability condition
given in Equation~\eqref{eqn:mpr:EDF schedulability},
$\dem(A_k+\Tdeadline_k,\Rprocessors)$ depends on \GEDF\ and
$\sbf_{\mpr}(A_k+\Tdeadline_k)$ depends on MPR model $\mpr$. Suppose
there exists a function $\dem_{\DM}(A_k+\Tdeadline_k,\Rprocessors)$
that can compute the workload upper bound for a task set scheduled 
under global \DM. Then we can plug in
$\dem_{\DM}(A_k+\Tdeadline_k,\Rprocessors)$ into
Equation~\eqref{eqn:mpr:EDF schedulability} to derive a schedulability
condition for global \DM\ intra-cluster scheduling. In fact, such
a $\dem_{\DM}$ can be obtained by extending current results over
dedicated resource~\citep{BCL05a}.

Bertogna and Cirinei have derived an upper bound for the
worst-case response time of tasks scheduled under \GEDF\ or global
\DM~\citep{BeCi07}. They have also used this bound to improve the
carry-in demand $CI_i$ that we use in our schedulability
condition. However this improvement to the carry-in demand cannot be
applied in our case. Since we schedule tasks using MPR model, any
response time computation depends on the processor supply in
addition to task demand. Then to use the response time bounds
presented in~\citep{BeCi07}, we must extend it with $\sbf$ of MPR 
model. However, since we are computing the MPR model (capacity
$\Rcapacity$ and concurrency $\Rprocessors$), its $\sbf$ is unknown
and therefore the response time is not computable. One way to resolve
this issue is to compute $\Rcapacity$ and $\Rprocessors$ using 
binary search. However, since $\Rcapacity$ belongs to the domain of
non-negative real numbers, binary
search for the minimum $\Rcapacity$ can take a prohibitively long time.

\section{Component interface generation}
\label{sec:mpr:interface_generation}

In this section we develop a technique to generate interface
$\GMpr$ for a cluster $\component$ comprising of sporadic tasks 
$\taskset = \{ \STask{1}, \ldots , \STask{n} \}$ scheduled under \GEDF. For
this purpose we use the schedulability condition given by 
Theorem~\ref{thm:mpr:EDF schedulability}. We assume that period $\Rperiod$ of
interface $\mpr$ is specified a priori by the system designer. For
instance, one can specify this period taking into account preemption
overheads in the system. We then compute values for capacity
$\Rcapacity$ and number of processors $\Rprocessors$ so that resource 
bandwidth of the interface is minimized. Finally, we also develop a
technique that transforms MPR interfaces to periodic
tasks\footnote{A periodic task $\GSTask$ is a special case of
  the identically defined sporadic task; $\Tperiod$ in the periodic
  case denotes the exact separation between successive job releases
  instead of minimum separation.}, in order to schedule clusters on
the multiprocessor platform (inter-cluster scheduling).

\subsection{Minimum bandwidth interface}
\label{sec:mpr:MEDP_optimality}

It is desirable to minimize the resource bandwidth of $\mpr$
when generating an interface for $\component$, because $\component$ then
consumes the minimum possible processor supply. We now give a lemma
which states that the resource bandwidth required to guarantee
schedulability of task set $\taskset$ monotonically increases as
number of processors in the cluster increases.
\begin{lemma} 
\label{lemma:mpr:min-cpu}
Consider interfaces $\Mpr{1}$ and $\Mpr{2}$, such that $\Rperiod_1
= \Rperiod_2$ and $\Rprocessors_2 = \Rprocessors_1 + 1$. Suppose these
two interfaces guarantee schedulability of the same component
$\component$ with their smallest possible resource bandwidth,
respectively. Then $\mpr_2$ has a higher resource bandwidth than
$\mpr_1$ does, \emph{i.e.}, $\Rcapacity_1 < \Rcapacity_2$.
\end{lemma}
\begin{proof}
We prove this lemma by contradiction. Consider $\mpr'_2 =
\tuple{\Rperiod_2,\Rcapacity'_2,\Rprocessors_2}$ such that
$\Rcapacity'_2 \leq \Rcapacity_1$. Suppose $\mpr'_2$ guarantees
schedulability of component $\component$ as per
Theorem~\ref{thm:mpr:EDF schedulability}. 

Let $\delta_d$ denote the difference in processor requirements of
$\component$ on $\Rprocessors_1$ and $\Rprocessors_2$ processors
for some interval length $A_k + D_k$, \emph{i.e.}, difference in 
function $\dem$ used in Theorem~\ref{thm:mpr:EDF schedulability}. Then 
\begin{align}
\delta_d = & \dem(A_k+\Tdeadline_k, \Rprocessors_2) -
\dem(A_k+\Tdeadline_k, \Rprocessors_1) \nonumber \\ 
= & \sum_{i: i \in L_{(\Rprocessors_2-1)}} (\bar{I}_{i,2} - \hat{I}_{i,2})
- \sum_{i: i \in L_{(\Rprocessors_1-1)}} (\bar{I}_{i,2} -
\hat{I}_{i,2}) + (\Rprocessors_2 - \Rprocessors_1) \Tcapacity_k
\nonumber \\ 
= & \sum_{i: i \in L_{(\Rprocessors_2-1)}} (\bar{I}_{i,2} - \hat{I}_{i,2})
- \sum_{i: i \in L_{(\Rprocessors_1-1)}} (\bar{I}_{i,2} -
\hat{I}_{i,2}) + \Tcapacity_k
\nonumber \\ 
> &  0 ~.
\label{eqn:mpr:delta-dbf}
\end{align}
It is indicated by $\delta_d > 0$ that the same component has a
greater upper bound on processor demand when it executes on more
processors. Now let $\delta_s$ denote the difference in the linear supply
bound function between $\mpr_1$ and $\mpr'_2$ for interval length $A_k
+ D_k$, \emph{i.e.}, 
\begin{align}
\delta_s = & \lsbf_{\mpr'_2}(A_k+\Tdeadline_k) -
\lsbf_{\mpr_1}(A_k+\Tdeadline_k) \nonumber \\
= & \frac{\Rcapacity'_2}{\Rperiod} \left( t - 2 \left ( \Rperiod + 1 -
  \frac{\Rcapacity'_2}{\Rprocessors_2} \right ) \right) -
\frac{\Rcapacity_1}{\Rperiod} \left( t - 2 \left ( \Rperiod + 1 -  
  \frac{\Rcapacity_1}{\Rprocessors_1} \right ) \right) \nonumber \\
\leq & \frac{\Rcapacity_1}{\Rperiod} \left( t - 2 \left ( \Rperiod + 1 -
  \frac{\Rcapacity_1}{\Rprocessors_2} \right ) \right) -
\frac{\Rcapacity_1}{\Rperiod} \left( t - 2 \left ( \Rperiod + 1 -  
  \frac{\Rcapacity_1}{\Rprocessors_1} \right ) \right) \nonumber \\
\leq & \frac{2({\Rcapacity_1})^2}{\Rperiod_1}
\left(\frac{1}{\Rprocessors_2} - \frac{1}{\Rprocessors_1} \right)
\nonumber \\ 
= & - \frac{2({\Rcapacity_1})^2}{\Rprocessors_1 \Rprocessors_2 \Rperiod_1}
\nonumber \\ 
< & 0 ~.
\label{eqn:mpr:delta-sbf}
\end{align}
It is indicated by $\delta_s < 0$ that MPR models provide less
processor supply with more available processors, when values of period 
and capacity are fixed. Thus $\delta_d > 0$ and $\delta_s < 0$ 
for all $A_k + D_k$. Since $\mpr_1$ guarantees schedulability of
component $\component$ using the smallest possible resource bandwidth,
$\dem(A_k+D_k,\Rprocessors_1) = \lsbf_{\mpr_1}(A_k+D_k)$ for some
$A_k+D_k$. Then $\dem(A_k+D_k,\Rprocessors_2) >
\lsbf_{\mpr'_2}(A_k+D_k)$ for that $A_k + D_k$, and therefore $\mpr'_2$
does not guarantee schedulability of $\component$ according to
Theorem~\ref{thm:mpr:EDF schedulability}. This contradicts the
assumption $\Rcapacity'_2 \leq \Rcapacity_1$.
\qed 
\end{proof}

Lemma~\ref{lemma:mpr:min-cpu} suggests that when we generate interface
$\mpr$, we should use the smallest number of processors to minimize resource
bandwidth of $\mpr$. However an arbitrarily small number for
$\Rprocessors$, say $\Rprocessors = 1$, may result in an
\emph{infeasible} $\mpr$. Recall 
that a MPR model $\GMpr$ is defined to be feasible if and only if
$\Rcapacity \leq \Rprocessors \Rperiod$. Therefore we find a feasible
interface $\mpr$ for $\component$ that: (1) guarantees schedulability
of $\component$ based on Theorem~\ref{thm:mpr:EDF schedulability} and
(2) uses the smallest possible number of processors ($m^*$). We can find such
$m^*$ through search. Since bandwidth is monotonic with number of
processors, a binary search can be performed to determine $m^*$. For
this search to terminate a lower and upper bound on $m^*$ should
be known. $\lceil U_{\taskset} \rceil$ is clearly a lower bound on the
number of processors necessary to schedule $\component$ where
$U_{\taskset} = \sum_i \frac{\Tcapacity_i}{\Tperiod_i}$. 
If the number of processors on the multiprocessor platform is known,
then that number can be used as an upper bound for $m^*$. Otherwise,
the following lemma gives an upper bound for $m^*$ as a function of
task parameters.

\begin{lemma}
If $\Rprocessors \geq \frac{\sum_{i=1}^n \Tcapacity_i}{\min_{i=1,
    \ldots , n} \{
  \Tdeadline_i - \Tcapacity_i \}} + n$, then feasible MPR model $\mpr
= \tuple{\Rperiod,\Rprocessors\Rperiod,\Rprocessors}$ guarantees schedulability of 
$\component$ as per Theorem~\ref{thm:mpr:EDF schedulability}.
\label{lem:mpr:upper bound m}
\end{lemma}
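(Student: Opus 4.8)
The plan is to show that the fully-dedicated model $\mpr = \tuple{\Rperiod,\Rprocessors\Rperiod,\Rprocessors}$ passes the test of Theorem~\ref{thm:mpr:EDF schedulability}. First I would record two facts about this particular model. It is feasible, since $\Rcapacity = \Rprocessors\Rperiod$ meets the feasibility condition $\Rcapacity \leq \Rprocessors\Rperiod$ with equality. And its supply bound function is simply $\sbf_{\mpr}(t) = \Rprocessors t$: substituting $\Rcapacity = \Rprocessors\Rperiod$ into Equation~\eqref{eqn:mpr:sbf_MPR} makes the largest no-supply interval $2\Rperiod - 2\lceil \Rcapacity/\Rprocessors\rceil$ vanish, so all $\Rprocessors$ processors are available at every instant (this is exactly the dedicated-resource specialization noted after Theorem~\ref{thm:mpr:EDF schedulability}). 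Consequently it suffices to verify $\dem(A_k+\Tdeadline_k,\Rprocessors) \leq \Rprocessors(A_k+\Tdeadline_k)$ for every $\task_k \in \taskset$ and every $A_k \geq 0$; write $t = A_k+\Tdeadline_k$ throughout.

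Next I would bound the demand crudely, term by term. The carry-in correction $\sum_{i \in L_{(\Rprocessors-1)}}(\bar{I}_{i,2}-\hat{I}_{i,2})$ is a sum of $\Rprocessors-1$ terms, each at most $CI_i \leq \Tcapacity_i$, so it is at most $\Tcapacity_{\Sigma}$, the sum of the $\Rprocessors-1$ largest $\Tcapacity_i$. This yields $\dem(t,\Rprocessors) \leq \Rprocessors\Tcapacity_k + \Tcapacity_{\Sigma} + \sum_{i=1}^n \hat{I}_{i,2}$. For the remaining sum I would discard the workload side of each minimum and keep only the box side: for $i \neq k$ the definition gives $\hat{I}_{i,2} \leq t - \Tcapacity_k$ directly, while for $i = k$ we have $\hat{I}_{k,2} \leq A_k = t - \Tdeadline_k \leq t - \Tcapacity_k$ using $\Tcapacity_k \leq \Tdeadline_k$. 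Hence $\sum_{i=1}^n \hat{I}_{i,2} \leq n\,(t-\Tcapacity_k)$, and so $\dem(t,\Rprocessors) \leq \Rprocessors\Tcapacity_k + \Tcapacity_{\Sigma} + n\,(t-\Tcapacity_k)$.

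It then remains to check $\Rprocessors\Tcapacity_k + \Tcapacity_{\Sigma} + n(t-\Tcapacity_k) \leq \Rprocessors t$, which after cancelling $\Rprocessors\Tcapacity_k$ is exactly $\Tcapacity_{\Sigma} \leq (\Rprocessors-n)(t-\Tcapacity_k)$. This is where the hypothesis enters. Since $A_k \geq 0$ we have $t - \Tcapacity_k \geq \Tdeadline_k - \Tcapacity_k \geq \min_{i}\{\Tdeadline_i - \Tcapacity_i\}$, and the assumed bound on $\Rprocessors$ rearranges to $\Rprocessors - n \geq \frac{\sum_{i=1}^n \Tcapacity_i}{\min_i\{\Tdeadline_i - \Tcapacity_i\}}$; multiplying these two lower bounds gives $(\Rprocessors-n)(t-\Tcapacity_k) \geq \sum_{i=1}^n \Tcapacity_i \geq \Tcapacity_{\Sigma}$, completing the argument.

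The calculation itself is short, so the only real content is choosing the bounds loose enough to be trivial yet tight enough to combine. The subtlety is to use the box bound $t - \Tcapacity_k$ for \emph{every} $\hat{I}_{i,2}$ (not the workload bound) and the crude estimate $\Tcapacity_{\Sigma} \leq \sum_i \Tcapacity_i$ for the carry-in, so that the product of the two per-task lower bounds $(t - \Tcapacity_k)$ and $(\Rprocessors - n)$ telescopes precisely into $\sum_i \Tcapacity_i$; this is also what motivates the particular form $\frac{\sum_i \Tcapacity_i}{\min_i\{\Tdeadline_i-\Tcapacity_i\}} + n$ of the stated threshold. I would also remark that the lemma is only meaningful when $\min_i\{\Tdeadline_i - \Tcapacity_i\} > 0$; otherwise the threshold on $\Rprocessors$ is vacuous.
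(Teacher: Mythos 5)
Your proof is correct and follows essentially the same route as the paper's: both bound each $\hat{I}_{i,2}$ by $A_k+\Tdeadline_k-\Tcapacity_k$, bound the carry-in sum $\sum_{i \in L_{(\Rprocessors-1)}}(\bar{I}_{i,2}-\hat{I}_{i,2})$ by $\sum_{i=1}^n \Tcapacity_i$, and then invoke the hypothesis rearranged via $A_k \geq 0$ so that $\Rprocessors(A_k+\Tdeadline_k-\Tcapacity_k) \geq \sum_{i=1}^n \Tcapacity_i + n(A_k+\Tdeadline_k-\Tcapacity_k)$; your cancellation of $\Rprocessors\Tcapacity_k$ before comparing is just a reordering of the paper's algebra. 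The only cosmetic differences are that you pass through $\Tcapacity_{\Sigma}$ before relaxing to $\sum_i \Tcapacity_i$, and that you state explicitly what the paper leaves implicit (feasibility of the model and $\sbf_{\mpr}(t)=\Rprocessors t$ for the fully dedicated case).
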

\begin{proof}
\begin{align}
\nonumber & \Rprocessors \geq \frac{\sum_{i=1}^n
  \Tcapacity_i}{\min_{i=1, \ldots , n} \{ \Tdeadline_i - \Tcapacity_i
  \}} + n & 
\\
\nonumber & (\mbox{Since } \forall k, A_k \geq 0 \mbox{ in Theorem~\ref{thm:mpr:EDF
    schedulability}}) & \\ 
\nonumber \Rightarrow & \Rprocessors \geq \frac{\sum_{i=1}^n
  \Tcapacity_i}{A_k+\Tdeadline_k-\Tcapacity_k} + n & 
\forall k \mbox{ and } \forall A_k \\
\Rightarrow & \Rprocessors (A_k+\Tdeadline_k-\Tcapacity_k) \geq \sum_{i=1}^n
\Tcapacity_i + n(A_k+\Tdeadline_k-\Tcapacity_k) & 
\forall k \mbox{ and } \forall A_k
\label{eqn:mpr:upper bound m}
\end{align}
Now consider the function $\dem(A_k+\Tdeadline_k,\Rprocessors)$ from
Theorem~\ref{thm:mpr:EDF schedulability}.
\begin{align*}
& \dem(A_k+\Tdeadline_k,\Rprocessors) =  \sum_{i = 1}^n \hat{I}_{i,2}
+ \sum_{i: i \in L_{(\Rprocessors-1)}} (\bar{I}_{i,2} -
\hat{I}_{i,2}) + \Rprocessors \Tcapacity_k \\
& \left (\mbox{Since each } \hat{I}_{i,2} \leq A_k+\Tdeadline_k-\Tcapacity_k 
\mbox{ and } \sum_{i: i \in L_{(\Rprocessors-1)}} (\bar{I}_{i,2} -
\hat{I}_{i,2}) \leq \sum_{i=1}^n \Tcapacity_i \right ) \\
\Rightarrow   & \dem(A_k+\Tdeadline_k,\Rprocessors) \leq
              n(A_k+\Tdeadline_k-\Tcapacity_k) + \sum_{i=1}^n 
                \Tcapacity_i + \Rprocessors \Tcapacity_k \\ 
& (\mbox{From Equation~\eqref{eqn:mpr:upper bound m}}) \\
\Rightarrow   &  \dem(A_k+\Tdeadline_k,\Rprocessors) \leq
              \Rprocessors(A_k+\Tdeadline_k-\Tcapacity_k) + 
                \Rprocessors \Tcapacity_k \\
\Rightarrow   & \dem(A_k+\Tdeadline_k,\Rprocessors) \leq
\sbf_{\mpr}(A_k+\Tdeadline_k)
\end{align*}
Since this inequality holds for all $k$ and $A_k$, from
Theorem~\ref{thm:mpr:EDF schedulability} we get that $\mpr$ is
guaranteed to schedule $\component$.
\qed
\end{proof}

Since $\mpr$ in Lemma~\ref{lem:mpr:upper bound m} is feasible and
guarantees schedulability of $\component$, $\frac{\sum_{i=1}^n
  \Tcapacity_i}{\min_{i=1}^n \{ \Tdeadline_i - \Tcapacity_i \}} + n$
is an upper bound for $m^*$. Thus we generate an interface for
$\component$ by doing a binary search for $m^*$ in the range $\left [ \lceil
U_{\taskset} \rceil, \frac{\sum_{i=1}^n \Tcapacity_i}{\min_{i=1}^n \{
  \Tdeadline_i - \Tcapacity_i \}} + n \right ]$. For each value of the number of
processors $\Rprocessors$, we compute the smallest value of
$\Rcapacity$ that satisfies Equation~\eqref{eqn:mpr:EDF
  schedulability} in Theorem~\ref{thm:mpr:EDF
  schedulability}, assuming $\sbf_{\mpr}$ is replaced with
$\lsbf_{\mpr}$. $\Rcapacity$ (= $\Rcapacity^*$), corresponding to the
smallest value of $\Rprocessors$ (= $m^*$) that 
guarantees schedulability of $\component$ and results in a feasible
interface, is then chosen as the capacity of $\mpr$. Also $m^*$ is
chosen as the number of processors in the cluster,
\emph{i.e.}, $\mpr = \tuple{\Rperiod,\Rcapacity^*,m^*}$. 

\textbf{Algorithm complexity.} To bound $A_k$ as in
Theorem~\ref{thm:mpr:Ak_bound} we must know the value of
$\Rcapacity$. However, since $\Rcapacity$ is being computed, we use
its smallest ($0$) and largest ($\Rprocessors \Rperiod$) possible 
values to bound $A_k$. For each value of $\Rprocessors > U_{\taskset}$,
$\Rcapacity$ can then be computed in pseudo-polynomial time using
Theorem~\eqref{thm:mpr:EDF schedulability}, assuming $\sbf$ is replaced
with $\lsbf$. This follows from the fact that the denominator in the
bound of $A_k$ in Theorem~\ref{thm:mpr:Ak_bound} is non-zero. The only
problem case is when $\Rprocessors = \lceil U_{\taskset} \rceil =
U_{\taskset}$. However in this case, we now show that $\mpr =
\tuple{\Rperiod,\Rprocessors\Rperiod,\Rprocessors}$ can schedule
$\component$ if and only if, $\Rprocessors=1$ and $\Tdeadline_i \geq
\Tperiod_i$ for each task $\task_i$ in $\taskset$. Clearly, if some
$\Tdeadline_i < \Tperiod_i$, then a resource bandwidth of $U_{\taskset}$ is
not sufficient to guarantee schedulability. Now suppose $\Rprocessors
> 1$. Then the left hand side of Equation~\eqref{eqn:mpr:EDF
  schedulability} is $> \sum_{i=1}^n \dbf_{\task_i}(A_k+\Tdeadline_k) \geq (A_k+\Tdeadline_k) U_{\taskset}
$, because $\hat{I}_{i,2} = \dbf_{\task_i}(A_k+\Tdeadline_k)$,
$\bar{I}_{i,2} \geq \dbf_{\task_i}(A_k+\Tdeadline_k)$, and
$\Rprocessors \Tcapacity_k > 0$. Hence in this case $\Rprocessors >
U_{\taskset}$ and this is a contradiction. Therefore computing the interface
for $\Rprocessors = U_{\taskset}$ can be done in constant time. The
number of different values of $\Rprocessors$ to be considered 
is polynomial in the input size,
because the search interval is bounded by numbers that are polynomial
in the input parameters. Therefore the entire interface generation
process has pseudo-polynomial complexity.

\begin{table*}[tbh]
\begin{center}
\footnotesize
\begin{tabular}{||c|c|c|c||}
\hline
      & & & \\
\multicolumn{1}{||c|}{Cluster} &
\multicolumn{1}{|c|}{Task set} &
\multicolumn{1}{|c|}{$\sum_i \frac{\Tcapacity_i}{\Tperiod_i}$} &
\multicolumn{1}{|c||}{$\sum_i \frac{\Tcapacity_i}{\Tdeadline_i}$} \\
      & & & \\
\hline
      & & & \\
$\component_1$ & $\{ (60,5,60), (60,5,60), (60,5,60), (60,5,60), (70,5,70), (70,5,70),$ & $1.304$ & $1.304$ \\
      & $(80,5,80), (80,5,80), (80,10,80), (90,5,90), (90,10,90),$ & & \\
      & $(90,10,90), (100,10,100), (100,10,100), (100,10,100) \}$ & & \\
      & & & \\
\hline
     & & & \\
$\component_2$ & $\{ (60,5,60), (100,5,100) \}$ & $0.1333$ & $0.1333$ \\
      & & & \\
\hline
      & & & \\
$\component_3$ & $\{ (45,2,40), (45,2,45), (45,3,40), (45,3,45), (50,5,45),$ & $1.1222$ & $1.1930$ \\
      & $(50,5,50), (50,5,50), (50,5,50), (70,5,60),  (70,5,60),$ & & \\
      & $(70,5,65), (70,5,65), (70,5,65), (70,5,65), (70,5,70) \}$  & & \\
      & & & \\
\hline
\end{tabular}
\normalsize
\end{center}
\caption{Clusters $\component_1, \component_2$ and $\component_3$}
\label{tab:mpr:components}
\end{table*}
\begin{figure*}
\centering
\subfigure[$\mpr_1^*$]{
\includegraphics[width=0.47\linewidth]{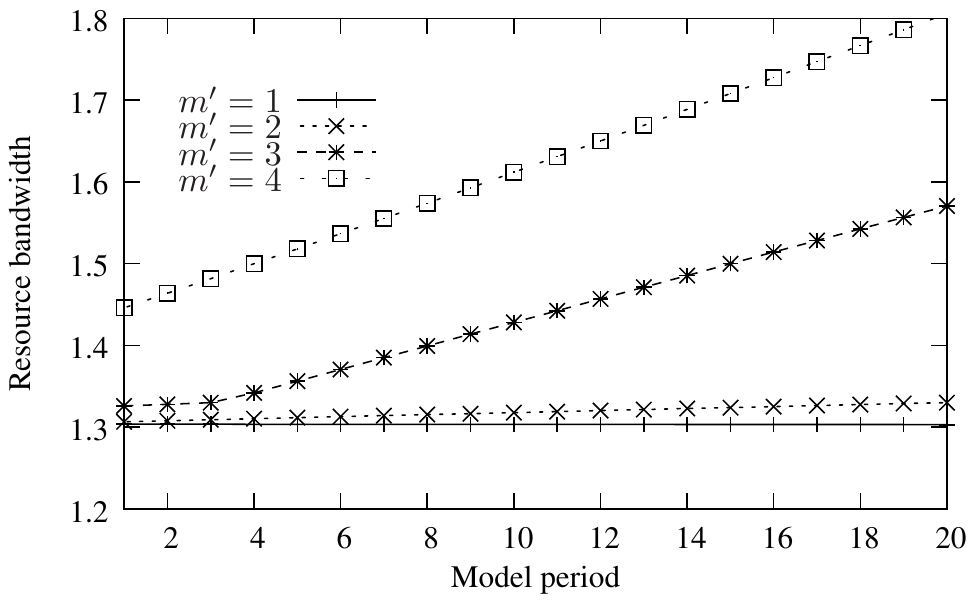}
\label{fig:mpr:I1}
}
\subfigure[$\mpr_2^*$]{
\includegraphics[width=0.47\linewidth]{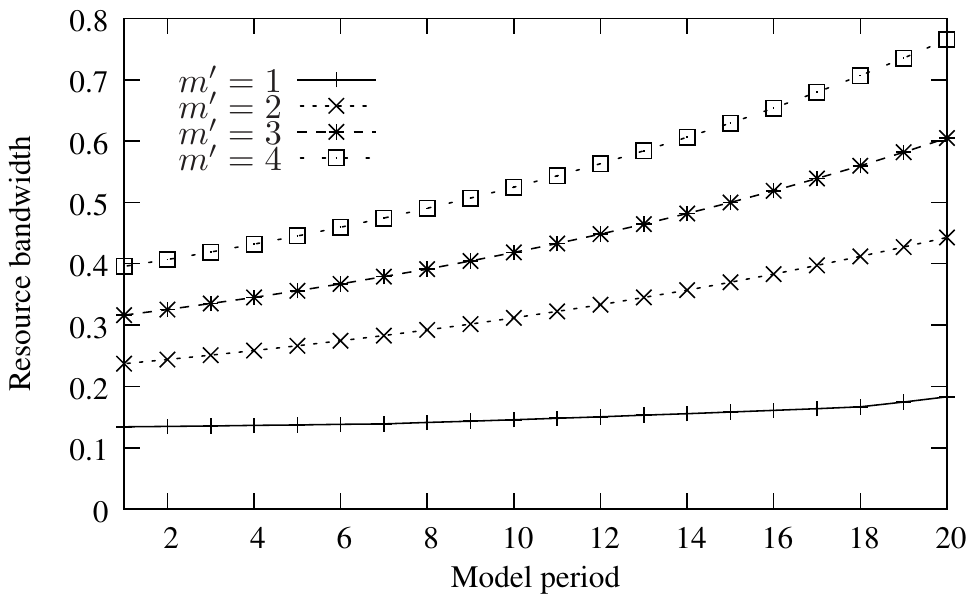}
\label{fig:mpr:I2}
}
\subfigure[$\mpr_3^*$]{
\includegraphics[width=0.47\linewidth]{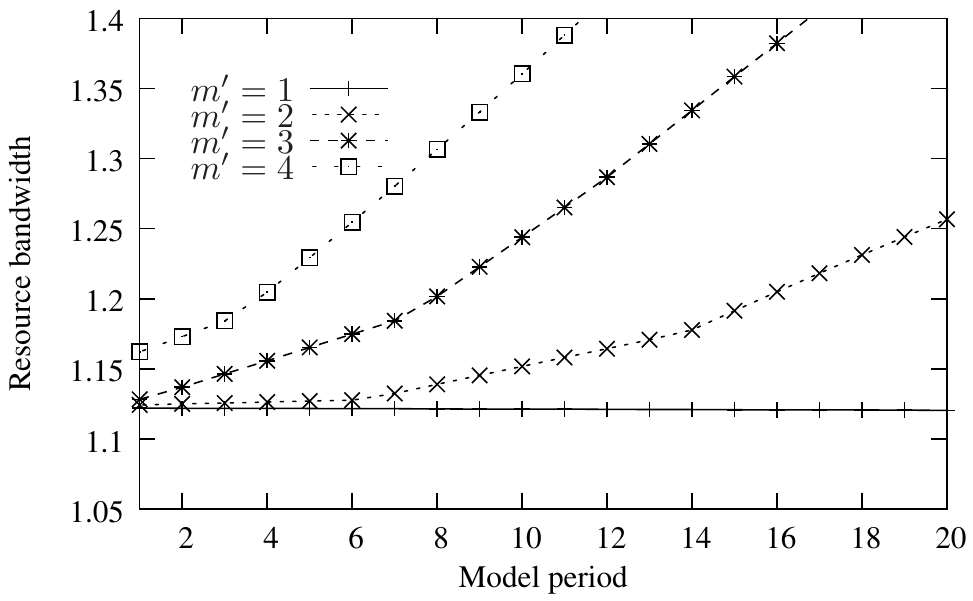}
\label{fig:mpr:I3} 
} 
\subfigure[$\mpr^*$]{
\includegraphics[width=0.47\linewidth]{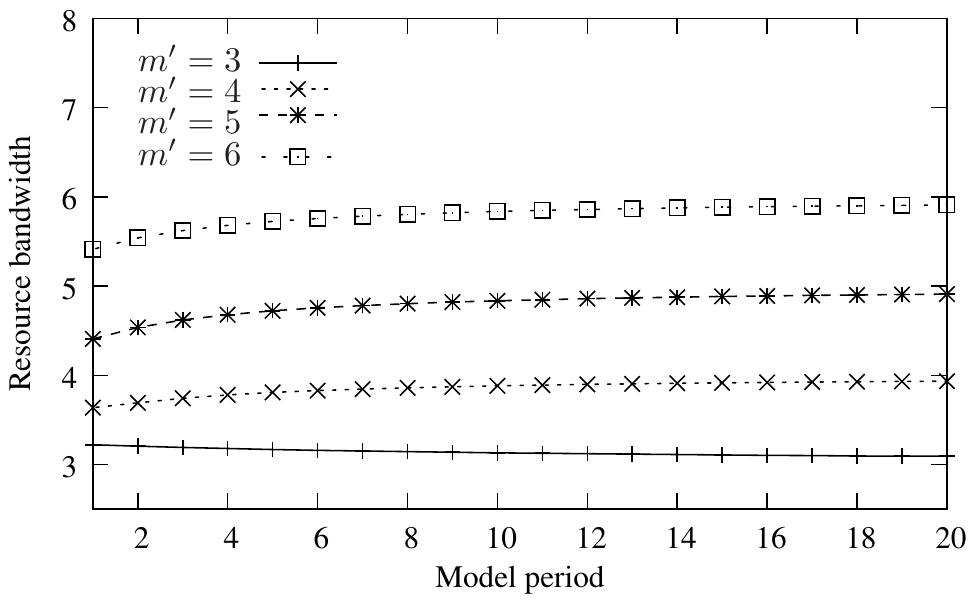}
\label{fig:mpr:I4} 
}
\caption{MPR model based interfaces} 
\label{fig:mpr:I}
\end{figure*}

\begin{example}
Consider the example virtual clustering framework 
shown in Figure~\ref{fig:mpr:hierarchy_components}. Let clusters
$\component_1, \component_2$ and $\component_3$ be assigned tasks
as shown in Table~\ref{tab:mpr:components}. Interfaces $\mpr_1^*,
\mpr_2^*$ and $\mpr_3^*$, for clusters $\component_1,
\component_2$ and $\component_3$, are shown in
Figures~\ref{fig:mpr:I1},~\ref{fig:mpr:I2} and~\ref{fig:mpr:I3} respectively.
In the figures we have plotted the resource bandwidth of these interfaces
for varying periods and $\Rprocessors$ denotes the number of processors in
the cluster. 
\label{eg:mpr:example1}
\end{example}

Figures~\ref{fig:mpr:I1} and~\ref{fig:mpr:I3} show that when
$\Rprocessors=1$ 
interfaces $\mpr_1^*$ and $\mpr_3^*$ are not feasible; their resource 
bandwidths are greater than $1$ for all period 
values. This shows that clusters $\component_1$ and $\component_3$ are
not schedulable on clusters having one processor. This is as expected because 
the utilization of task sets in these clusters is also greater than one.
However when $\Rprocessors=2$, $\mpr_1^*$ and $\mpr_3^*$ are feasible,
\emph{i.e.}, their respective resource bandwidths are at most
two. Therefore for clusters $\component_1$ and $\component_3$, we
choose MPR interfaces $\mpr_1^*$ and $\mpr_3^*$ with $\Rprocessors=2$. Similarly, Figure~\ref{fig:mpr:I2} shows 
that $\mpr_2^*$ is a feasible interface for cluster $\component_2$ when $\Rprocessors=1$.
These plots also show that resource overheads\footnote{Difference
  between $\max_k \max_{A_k}
  \frac{\dem(A_k+D_k,\Rprocessors)}{A_k+D_k}$ and resource bandwidth
  of MPR interface.}
incurred by our interfaces are small for the non-trivial examples
presented here.

\subsection{Inter-cluster scheduling}

As discussed in the introduction, virtual clustering 
involves two-level scheduling; scheduling of tasks within each
cluster (intra-cluster scheduling) and scheduling of clusters on the
multiprocessor platform (inter-cluster scheduling). MPR interfaces
generated in the previous section capture task-level concurrency
constraints within a cluster. Hence inter-cluster scheduling need
not worry about these constraints when it schedules cluster 
interfaces. However there is no known scheduling 
algorithm for MPR interfaces. Therefore we now develop
a technique to transform a MPR model into periodic tasks such that
processor requirements of these tasks are at least as much as those of
the resource model.
\begin{definition}
Consider a MPR model $\mpr = \tuple{\Rperiod,\Rcapacity^*,m^*}$ and
let $\alpha = \Rcapacity^* - m^* \left \lfloor
  \frac{\Rcapacity^*}{m^*} \right \rfloor$ and $k = \lfloor \alpha
\rfloor$. Define the transformation from $\mpr$ to a periodic task set
$\taskset_{\mpr}$ as \\
$\taskset_{\mpr} = \{ \STask{1}, \ldots , \STask{m^*} \}$, where \\
$\task_1 = \ldots = \task_k = \left (\Rperiod,\left \lfloor
    \frac{\Rcapacity^*}{m^*} \right \rfloor + 1,\Rperiod \right )$, \\
$\task_{k + 1} = \left ( \Rperiod, \left \lfloor \frac{\Rcapacity^*}{m^*} \right \rfloor
  + \alpha - k \left \lfloor \frac{\alpha}{k} \right \rfloor, \Rperiod \right
)$ and \\
$\task_{k + 2} = \ldots = \task_{m^*} = \left (\Rperiod,\left \lfloor
    \frac{\Rcapacity^*}{m^*} \right \rfloor,\Rperiod \right)$.
\label{def:mpr:demand-supply optimal}
\end{definition}
In this definition it is easy to see that the total processor demand
of $\taskset_{\mpr}$ is $\Rcapacity^*$ in every period
$\Rperiod$. Further, we have assumed that whenever $\Rcapacity^*$ is
not an integer, processor supply from $\mpr$ fully utilizes one
processor before using another. For example, if
$\Rcapacity^* = 2.5$ and $m^* = 3$, then $\mpr$ will provide two units
of resource from two processors and 
the remaining $0.5$ units from the third processor. The following theorem
proves correctness of this transformation.
\begin{theorem}
If all the deadlines of task set $\taskset_{\mpr}$ in
Definition~\ref{def:mpr:demand-supply optimal} are met by some
processor supply with concurrency at most $m^*$ at any time instant,
then its supply bound function is lower bounded by $\sbf_{\mpr}$.
\label{thm:mpr:demand_supply optimal}
\end{theorem}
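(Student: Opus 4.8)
The plan is to prove the statement directly by reducing to the single least generous supply. I will show that \emph{every} supply $S$ (with concurrency at most $m^*$) that meets all deadlines of $\taskset_{\mpr}$ satisfies $\sbf_S(t) \ge \sbf_{\mpr}(t)$ for all $t$. First I would fix a window of length $t$ and identify the smallest amount of service that any deadline-meeting supply can place in it. Since every $\task_i \in \taskset_{\mpr}$ has implicit deadline $\Tdeadline_i = \Rperiod$, meeting deadlines forces $S$ to deliver exactly $\Tcapacity_i$ units to $\task_i$ in each period of length $\Rperiod$; because the tasks are independent, the service inside the window is minimized when each task is phased to defer its budget maximally (earliest in the preceding period, latest in the following one). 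This gives $\sbf_S(t) \ge \sum_{i=1}^{m^*}\sbf_i(t)$, where $\sbf_i$ is the supply bound function of a single sequential (rate-one) periodic supply of budget $\Tcapacity_i$ and period $\Rperiod$. It therefore suffices to prove the purely analytic inequality $\sum_{i=1}^{m^*}\sbf_i(t) \ge \sbf_{\mpr}(t)$ for all $t \ge 0$.

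Next I would establish that inequality by a direct comparison of the two worst-case profiles. By Definition~\ref{def:mpr:demand-supply optimal} the capacities satisfy $\sum_{i=1}^{m^*}\Tcapacity_i = \Rcapacity^*$, so both $\sum_i \sbf_i$ and $\sbf_{\mpr}$ accrue $\Rcapacity^*$ units over each successive interval of length $\Rperiod$ and both respect the concurrency bound $m^*$. Reusing the period-reduction idea from the earlier lemma that showed $\lsbf_{\mpr} \le \sbf_{\mpr}$ — both functions gain exactly $\Rcapacity^*$ and repeat their slope pattern every $\Rperiod$ — I would reduce the claim to a single interval of length $\Rperiod$. On that interval both profiles first accrue $m^* \lfloor \Rcapacity^*/m^* \rfloor$ units at full concurrency $m^*$ and then differ only in how the residual $\beta = \Rcapacity^* - m^* \lfloor \Rcapacity^*/m^* \rfloor$ units are shaped. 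The capacity split of Definition~\ref{def:mpr:demand-supply optimal} — $k$ tasks of capacity $\lfloor \Rcapacity^*/m^* \rfloor + 1$, one intermediate task, and the remaining tasks of capacity $\lfloor \Rcapacity^*/m^* \rfloor$ — is engineered precisely so that the aggregate staircase $\sum_i \sbf_i$ meets or exceeds the $\sbf_{\mpr}$ curve at each of its breakpoints, and I would verify the inequality breakpoint-by-breakpoint, the only nontrivial ones being where a new task's service enters the sum and where $\sbf_{\mpr}$ changes slope.

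I expect the main obstacle to be the bookkeeping around the residual $\beta$, precisely the case where $\Rcapacity^*$ is not a multiple of $m^*$. There the intermediate task $\task_{k+1}$ carries a fractional capacity and $\sbf_{\mpr}$ carries the correction term $-(m^*-\beta)$, and I would have to show that this term exactly absorbs the additional deferral the model permits relative to the aggregated tasks. Concretely, the crux is to confirm that the longest no-supply interval of the deferred aggregate — which is governed by the largest task capacity $\max_i \Tcapacity_i$ — is no longer than the one guaranteed by $\sbf_{\mpr}$, so that $\sbf_{\mpr}$ never turns positive before $\sum_i \sbf_i$ does. The degenerate subcase $k = \lfloor \beta \rfloor = 0$ (so $0 \le \beta < 1$, where the term $k \lfloor \beta/k \rfloor$ must be read as $0$ and the whole fractional remainder sits on $\task_{k+1}$) would be handled separately. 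Once these cases are discharged, the pointwise chain $\sbf_S(t) \ge \sum_i \sbf_i(t) \ge \sbf_{\mpr}(t)$ yields the theorem.
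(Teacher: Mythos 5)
Your proposal takes a genuinely different route from the paper. The paper's proof is definitional and very short: since $\taskset_{\mpr}$ has $m^*$ tasks, concurrency beyond $m^*$ can be ignored; meeting all deadlines forces the supply to deliver at least $\Rcapacity^*$ units in every period window of length $\Rperiod$ with concurrency at most $m^*$; this is precisely the contract of the MPR model $\tuple{\Rperiod,\Rcapacity^*,m^*}$, so the supply is one of the patterns the model admits and its supply bound function is at least $\sbf_{\mpr}$, the minimum over all admitted patterns. The paper never decomposes per task and never touches the closed form of Equation~\eqref{eqn:mpr:sbf_MPR}. Your step (1) is sound --- each task's allocation is a sequential supply of $\Tcapacity_i$ per period window, hence a conforming supply of the uniprocessor periodic model $(\Rperiod,\Tcapacity_i)$, so the service in any fixed window of length $t$ is at least $\sum_{i}\sbf_i(t)$ --- and it is moreover \emph{tight}: the per-task worst-case phasings can all be realized simultaneously by a single deadline-meeting schedule with concurrency at most $m^*$.

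That tightness is what sinks step (2) as an unconditional analytic claim, and it fails exactly in the subcase you flagged as the crux. When $0 < \alpha < 1$ (so $k = 0$ and the whole fractional remainder sits on one task), the largest task capacity is $\lfloor\Rcapacity^*/m^*\rfloor + \alpha$, which is strictly less than $\lceil\Rcapacity^*/m^*\rceil$, and the no-supply interval of the deferred aggregate can outlast the one built into Equation~\eqref{eqn:mpr:sbf_MPR}. Concretely, take $\mpr = \tuple{4, 4.5, 2}$: Definition~\ref{def:mpr:demand-supply optimal} gives $\taskset_{\mpr} = \{(4,2.5,4),\,(4,2,4)\}$. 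Scheduling the first task on $[0,2.5]$ and $[5.5,8]$ and the second on $[0,2]$ and $[6,8]$ meets every deadline with concurrency at most $2$, yet provides zero supply on $(2.5,5.5)$, an interval of length $3$; meanwhile Equation~\eqref{eqn:mpr:sbf_MPR} yields $\sbf_{\mpr}(2.9) = \max\{0,\, 2(1.9) - 3.5\} = 0.3 > 0$. So the inequality $\sum_i\sbf_i(t) \geq \sbf_{\mpr}(t)$ you reduce the theorem to is false in continuous time, and because your step (1) is achievable, no sharper bookkeeping of the residual $\beta$ can repair it. To finish your program you must either restrict to integral $\Rcapacity^*$ --- where Definition~\ref{def:mpr:demand-supply optimal} makes the largest capacity equal $\lceil\Rcapacity^*/m^*\rceil$, the blackout intervals coincide, and your breakpoint-by-breakpoint verification plausibly goes through --- or retreat to the paper's coarser, contract-level argument, which never unfolds the closed form and thus never meets this case. (Your finer route in fact exposes that Equation~\eqref{eqn:mpr:sbf_MPR} presumes the quantized supply patterns of Figure~\ref{fig:mpr:sbf_schedule_MPR}, a convention the paper's own two-line proof silently relies on.)
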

\begin{proof}
Since $\taskset_{\mpr}$ has $m^*$ tasks, it can utilize at most $m^*$
processors at any time instant. Therefore if some processor supply
provides more than $m^*$ processors at any time instant, then we can
ignore these additional processor allocations. Hence we only need to
consider processor supplies with concurrency at most $m^*$. 

Total processor demand of all the tasks in $\taskset_{\mpr}$ is
$\Rcapacity^*$ in every period of $\Rperiod$ time units. Then to meet
all the deadlines of task set $\taskset_{\mpr}$, any processor supply
must provide at least $\Rcapacity^*$ processor units in every period of
$\Rperiod$ time units, with amount of concurrency at most $m^*$. But this is
exactly the definition of MPR model $\mpr =
\tuple{\Rperiod,\Rcapacity^*,m^*}$. Therefore the supply bound
function of this processor supply is lower bounded by $\sbf_{\mpr}$.  
\qed
\end{proof}

Thus MPR interfaces generated in the previous section can be
transformed into periodic tasks using
Definition~\ref{def:mpr:demand-supply optimal}. Once such tasks are
generated for each virtual cluster, inter-cluster scheduling can be
done using existing multiprocessor algorithms like \GEDF,
Pfair~\citep{BCP96}, etc.
\begin{example}
For MPR interfaces $\mpr_1^*,\mpr_2^*$ and $\mpr_3^*$
generated in Example~\ref{eg:mpr:example1}, we select periods
$6,8,$ and $5$ respectively, \emph{i.e.}, interfaces
$\tuple{6,8.22,2}, \tuple{8,2.34,1}$ and $\tuple{5,5.83,2}$. Using 
Definition~\ref{def:mpr:demand-supply optimal} we 
get task sets $\taskset_{\mpr_1^*} = \{ (6,5,6), (6,4,6)
\}, \taskset_{\mpr_2^*} = \{ (8,3,8) \}$ and $\taskset_{\mpr_3^*} = \{
(5,3,5), (5,3,5) \}$. Suppose the three clusters
$\component_1,\component_2$ and $\component_3$ (\emph{i.e.}, task set
$\{ \taskset_{\mpr_1^*}, \taskset_{\mpr_2^*}, \taskset_{\mpr_3^*} \}$) are
scheduled on a multiprocessor platform using \GEDF. Then the
resulting MPR interface $\mpr^*$ is plotted in Figure~\ref{fig:mpr:I4}.
As shown in the figure, $\mpr^*$ is not feasible for
$\Rprocessors=3$; its resource bandwidth is greater than $3$ for all
period values. However these three clusters are schedulable on a
multiprocessor platform having $4$ processors (in the figure $\mpr^*$
is feasible when $\Rprocessors=4$).
\label{eg:mpr:example2}
\end{example}

The above example clearly illustrates the advantage of virtual
clustering over physical clustering. The three components
$\component_1$, $\component_2$ and $\component_3$, would require $5$
processors under physical clustering ($2$ each for $\component_1$ and
$\component_3$ and $1$ for $\component_2$). On the other hand, a
\GEDF\ based virtual clustering technique can schedule these clusters
using only $4$ processors. Although total utilization of tasks in the
three clusters is $2.56$, our analysis requires $4$ processors to
schedule the system. This overhead is as a result of the
following factors: (1) \GEDF\ is not an optimal scheduling
algorithm on multiprocessor platforms (both for intra- and
inter-cluster scheduling), (2) the schedulability
conditions we use are only sufficient conditions, and (3) capturing
task-level concurrency constraints in a component interface
leads to some increase in processor requirements (resource overhead of
abstracting a cluster into MPR interface).

\section{Virtual cluster-based scheduling algorithms}
\label{sec:mpr:improved_virtual_cluster_main}

In this section we propose new virtual-cluster based scheduling algorithms
for implicit deadline sporadic task systems. Prior to presenting these
algorithms, we eliminate resource overheads from the virtual
clustering framework proposed in Section~\ref{sec:mpr:interface_generation}. 

\subsection{Improved virtual-clustering framework}
\label{sec:mpr:improved_virtual_cluster}

In this section we present an inter-cluster scheduling algorithm that is optimal
whenever all the MPR interfaces being scheduled under it have
identical periods. We also present another transformation from MPR
models to periodic tasks, which along with the optimal inter-cluster
scheduler, results in an improved $\sbf$ for MPR models. These two
together, eliminate the resource overheads described at the end of
previous section.

McNaughton~\citep{McN59} presented an algorithm for scheduling real-time jobs in a given
time interval on a multiprocessor platform. This algorithm can be explained as follows:
Consider $n$ jobs to be scheduled on $m$ processors in a time interval
$(t_1,t_2]$ of length $t$, such that no job is simultaneously
scheduled on more than one processor. The job set need not be sorted
in any particular order. McNaughton's algorithm schedules the $i^{th}$
job on the first non-empty processor, packing jobs from left to
right. Suppose the $(i-1)^{st}$ job was scheduled on processor $k$ up to
time instant $t_3$ ($t_1 \leq t_3 \leq t_2$). Then up to $t_2 - t_3$
time units of the $i^{th}$ job are scheduled on processor $k$ and the
remaining time units are scheduled on processor $k+1$ starting from
$t_1$. Figure~\ref{fig:mpr:mcnaughton_job} illustrates this schedule
for a job set $\{ J_1, \ldots , J_5 \}$ on $4$ processors. Note that
if the total resource demand of a job is at most $t_2 - t_1$, then (1) 
the job is scheduled on at most two processors by McNaughton's
algorithm, and (2) the job is never scheduled simultaneously on 
both the processors. The following theorem establishes conditions
under which this algorithm can successfully schedule job sets.
\begin{figure}
\centering
\includegraphics[width=0.6\linewidth]{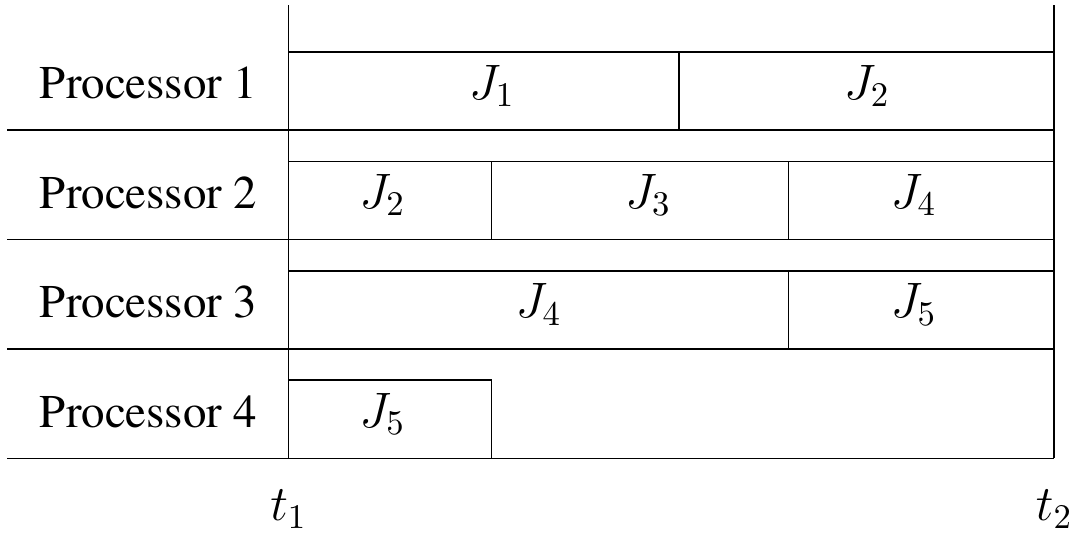}
\caption{Schedule of job set under McNaughton's algorithm in the interval
  $(t_1,t_2]$}
\label{fig:mpr:mcnaughton_job}
\end{figure}
\begin{theorem}[Theorem 3.1 in~\citep{McN59}]
Let $\Jcapacity_1, \ldots , \Jcapacity_n$ denote the number of
processor units of the $n$ jobs that must be scheduled in the interval
$(t_1, t_2]$ on $m$ identical, unit-capacity processors. If $\sum_{i=1}^n \Jcapacity_i
\leq m(t_2 - t_1)$, then a necessary and sufficient condition to
guarantee schedulability of this job set is that for all $i$,
$\Jcapacity_i \leq t_2 - t_1$.  
\label{thm:mpr:mcnaughton_optimal}
\end{theorem}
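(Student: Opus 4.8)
The plan is to prove the two directions separately, dispatching necessity immediately and concentrating the work on sufficiency through the explicit wrap-around schedule that the algorithm produces. For necessity I would argue directly from the no-parallelism restriction: within $(t_1,t_2]$ each job may occupy at most one processor at any instant, so the total processor capacity a single job can receive over the interval is bounded by its length $t_2 - t_1$. Consequently, if some $\Jcapacity_i > t_2 - t_1$ then that job cannot receive its full allocation no matter how the platform is used, and the job set is unschedulable. Note this direction does not even invoke the hypothesis $\sum_i \Jcapacity_i \le m(t_2 - t_1)$.

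For sufficiency I would assume $\Jcapacity_i \le t_2 - t_1$ for every $i$ together with $\sum_i \Jcapacity_i \le m(t_2 - t_1)$, and analyze the schedule built by packing jobs contiguously from left to right, wrapping from the top of one processor ($t_2$) to the bottom of the next ($t_1$). The first observation is that this packing leaves no idle gaps: every processor the algorithm touches, except possibly the last, is filled completely over $(t_1,t_2]$. Hence the number of processors used equals $\lceil (\sum_i \Jcapacity_i)/(t_2 - t_1) \rceil$, which by the total-work hypothesis is at most $m$, so the schedule fits on the platform.

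The crux is to verify that no job is ever scheduled on two processors at the same instant. A job occupies two processors only when it is split, so suppose job $i$ begins at time $s$ on processor $k$ (with $t_1 \le s$) and its demand exceeds the remaining room $t_2 - s$. Then $t_2 - s$ units run on processor $k$ over $[s,t_2]$ and the residual $\Jcapacity_i - (t_2 - s)$ units run on processor $k+1$ over $[t_1,\, t_1 + \Jcapacity_i - (t_2 - s)]$. These two time intervals are disjoint precisely when $t_1 + \Jcapacity_i - (t_2 - s) \le s$, which rearranges to $\Jcapacity_i \le t_2 - t_1$ --- exactly the hypothesis. The same bound shows the residual portion satisfies $\Jcapacity_i - (t_2 - s) \le s - t_1 \le t_2 - t_1$, so it fits inside a single processor and no job spills onto a third. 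Thus every job receives its full demand with no self-overlap, which completes the sufficiency direction.

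The main obstacle is isolating the single inequality that makes the wrap-around safe: the whole argument hinges on the fact that the \emph{early} fragment placed on processor $k+1$ ends no later than the \emph{late} fragment on processor $k$ begins, and that this is equivalent to $\Jcapacity_i \le t_2 - t_1$. Everything else --- the gap-free filling and the bound on the number of processors --- is bookkeeping that follows routinely from the total-work hypothesis.
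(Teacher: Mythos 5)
Your proof is correct and takes essentially the approach the paper relies on: the paper imports this result from \citep{McN59} without reproducing a proof, but its surrounding discussion---the left-to-right wrap-around packing and the observation that a job with demand at most $t_2-t_1$ occupies at most two processors and never both simultaneously---is exactly what you formalize, with the key disjointness inequality $t_1 + \Jcapacity_i - (t_2 - s) \leq s \Leftrightarrow \Jcapacity_i \leq t_2 - t_1$ correctly isolated. Your necessity argument likewise matches the paper's informal remarks immediately following the theorem, so there is nothing to fault.
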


If $\Jcapacity_i > t_2 - t_1$ then the $i^{th}$ job cannot be
scheduled in the interval $(t_1, t_2]$ by any scheduling algorithm, unless
the job is simultaneously scheduled on more than one
processor. Likewise, if $\sum_{i=1}^n \Jcapacity_i > m(t_2 - t_1)$,
then also the job set cannot be scheduled by any scheduling algorithm,
because the total processor demand in the interval $(t_1, t_2]$ is greater
than the total available processing capacity. Hence 
Theorem~\ref{thm:mpr:mcnaughton_optimal} in fact shows that
McNaughton's algorithm is optimal for scheduling job sets in a given
time interval.

Consider a periodic task set $\taskset = \{ (\Tperiod, \Tcapacity_1,
\Tperiod), \ldots , (\Tperiod, \Tcapacity_n, \Tperiod) \}$. Tasks in
$\taskset$ have identical periods and implicit deadline. Suppose we
use McNaughton's algorithm in the intervals $(k \Tperiod, (k+1)
\Tperiod]$, $k \in \mathbb{I}$, to schedule jobs of $\taskset$ on $m$
identical, unit-capacity processors. Then for each interval $(k
\Tperiod, (k+1) \Tperiod]$ (1) all jobs of $\taskset$ are released at
the beginning of the interval ($k \Tperiod$) and (2) all jobs of
$\taskset$ have deadline at the end of the interval ($(k+1)
\Tperiod$). Therefore, from Theorem~\ref{thm:mpr:mcnaughton_optimal}, 
we get that McNaughton's algorithm optimally schedules these jobs in
each interval and this leads to the following direct corollary.
\begin{corollary}
Let $\taskset = \{ \STask{1}, \ldots , \STask{n} \}$ denote a periodic
task set to be scheduled on $m$ identical, unit-capacity
processors. If $\Tperiod_1 = \ldots = \Tperiod_n = \Tdeadline_1 =
\ldots = \Tdeadline_n$ $(= \Tperiod)$, then a necessary and sufficient
condition for $\taskset$ to be schedulable using McNaughton's
algorithm is that $\sum_{i = 1}^n \Tcapacity_i \leq m \Tperiod$ and
$\Tcapacity_i \leq \Tperiod_i$ for each $i$.
\label{cor:mpr:mcnaughton_optimal_MPR}
\end{corollary}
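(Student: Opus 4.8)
The plan is to reduce the periodic scheduling problem, over all of time, to the single-interval job-scheduling problem handled by Theorem~\ref{thm:mpr:mcnaughton_optimal}, and then invoke that theorem directly. First I would observe that, because $\Tperiod_1 = \ldots = \Tperiod_n = \Tperiod$ and every deadline is implicit, within each interval $(k\Tperiod, (k+1)\Tperiod]$ (for $k \in \mathbb{I}$) each task $\task_i$ releases exactly one job at the left endpoint $k\Tperiod$ whose deadline is the right endpoint $(k+1)\Tperiod$, and that job requires $\Tcapacity_i$ units of processor capacity. Since McNaughton's algorithm is applied independently on each such interval, and the job instance presented in every interval is identical---the same $n$ jobs with the same requirements $\Tcapacity_1, \ldots, \Tcapacity_n$ in a window of length $\Tperiod$---the periodic task set $\taskset$ is schedulable under McNaughton's algorithm if and only if this single-interval instance is.

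Next I would apply Theorem~\ref{thm:mpr:mcnaughton_optimal} to a single such interval, setting $t_2 - t_1 = \Tperiod$ and $\Jcapacity_i = \Tcapacity_i$. For sufficiency, if $\sum_{i=1}^n \Tcapacity_i \leq m\Tperiod$ and $\Tcapacity_i \leq \Tperiod$ for every $i$, then the theorem guarantees that McNaughton's algorithm schedules all $n$ jobs within the interval; replaying this in every interval produces a valid schedule for $\taskset$. For necessity, suppose $\taskset$ is schedulable under McNaughton's algorithm. Were $\sum_{i=1}^n \Tcapacity_i > m\Tperiod$, the total demand in a single interval would exceed the total available capacity $m\Tperiod$, so no algorithm---and in particular not McNaughton's---could meet all deadlines, a contradiction; hence $\sum_{i=1}^n \Tcapacity_i \leq m\Tperiod$. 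With this bound in hand, Theorem~\ref{thm:mpr:mcnaughton_optimal} states that $\Tcapacity_i \leq \Tperiod$ for all $i$ is necessary and sufficient, so that condition must hold as well.

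There is essentially no hard step here, since this is a direct specialization of McNaughton's theorem; the only care required is in separating the two listed conditions. Theorem~\ref{thm:mpr:mcnaughton_optimal} conditions its necessary-and-sufficient per-job statement on the total-demand bound $\sum_i \Jcapacity_i \leq m(t_2-t_1)$ already holding, so the total-demand condition $\sum_i \Tcapacity_i \leq m\Tperiod$ must be established separately---via the impossibility argument for the case where total demand exceeds supply, already noted in the discussion following Theorem~\ref{thm:mpr:mcnaughton_optimal}---before the per-job part can be invoked. Once both pieces are assembled, the corollary follows immediately.
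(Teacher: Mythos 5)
Your proposal is correct and follows essentially the same route as the paper, which likewise treats the corollary as a direct consequence of Theorem~\ref{thm:mpr:mcnaughton_optimal} by observing that within each interval $(k\Tperiod,(k+1)\Tperiod]$ all jobs are released at the left endpoint with deadlines at the right endpoint, reducing the problem to identical single-interval instances. Your extra care in separating the total-demand condition (whose necessity comes from the capacity argument, not from the theorem's conditional statement) from the per-job condition is a slightly more explicit rendering of what the paper leaves implicit, but it is the same argument.
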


Consider the virtual clustering framework proposed in
Section~\ref{sec:mpr:interface_generation}. Suppose all the MPR
interfaces in this framework have identical periods. Then all the
periodic tasks generated using Definition~\ref{def:mpr:demand-supply
  optimal} also have identical periods. And from
Corollary~\ref{cor:mpr:mcnaughton_optimal_MPR} we get that McNaughton's
algorithm is optimal for scheduling these tasks on the physical
platform, \emph{i.e.}, the algorithm does not incur any resource
overhead for inter-cluster scheduling.

\begin{figure}
\centering
\includegraphics[width=0.95\linewidth]{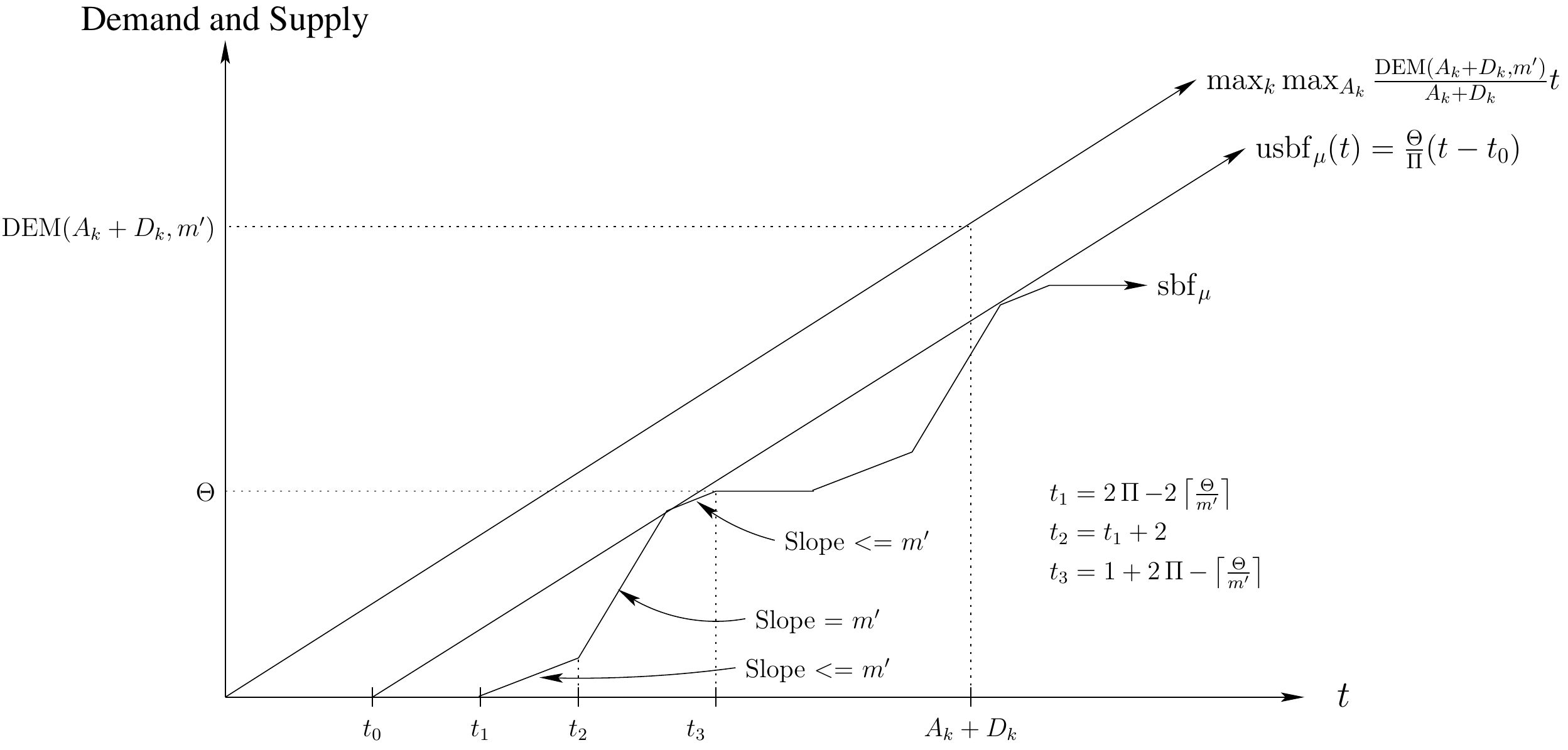}
\caption{Bandwidth of $\sbf_{\mpr}$ and schedulability load of cluster
  $\component$}
\label{fig:mpr:sbf_suboptimal}
\end{figure}
Another source of resource overhead is the abstraction of a
cluster into MPR interface and its transformation to a periodic task
set. This overhead results from the sub-optimality of $\sbf$ of MPR 
models which can be explained as follows. Consider the two functions,
$\sbf_{\mpr}$ and $\usbf_{\mpr}$, shown in
Figure~\ref{fig:mpr:sbf_suboptimal}. The resource bandwidth used by
$\mpr$ is equal to the slope of line $\usbf_{\mpr}$
($\frac{\Rcapacity}{\Rperiod}$). Suppose $\mpr$ is used to abstract
the processor demand of cluster $\component$ 
in Theorem~\ref{thm:mpr:EDF schedulability}. Since
$\sbf_{\mpr}$ has a non-zero x-axis intercept, the bandwidth
of $\mpr$ is strictly larger than the schedulability
load, $\max_k
\max_{A_k}\dem(A_k+\Tdeadline_k,\Rprocessors)/(A_k+\Tdeadline_k)$, of
cluster $\component$. If not then, as shown in
Figure~\ref{fig:mpr:sbf_suboptimal}, there exists some $A_k+D_k$ for
which Theorem~\ref{thm:mpr:EDF schedulability} is not satisfied. This
explains the resource overhead in the abstraction of clusters to MPR
interfaces. Now suppose $\mpr$ is transformed into the periodic task set 
$\taskset_{\mpr}$ using Definition~\ref{def:mpr:demand-supply
  optimal}. Then from Theorem~\ref{thm:mpr:demand_supply optimal} we
get that the total processor demand of $\taskset_{\mpr}$ is at least as
much as $\sbf_{\mpr}$. However, since $\sbf_{\mpr}$ does not guarantee
$\Rcapacity$ resource units in an interval of length $\Rperiod$ (see
Figure~\ref{fig:mpr:sbf_suboptimal}), a processor supply with supply
bound function exactly $\sbf_{\mpr}$ cannot schedule
$\taskset_{\mpr}$. This explains the resource overhead in the 
transformation of MPR interfaces to periodic tasks.

To eliminate the aforementioned overheads, we must modify the
transformation presented in Definition~\ref{def:mpr:demand-supply
  optimal}. This is because the schedule of $\taskset_{\mpr}$
determines the processor supply from the multiprocessor
platform to $\mpr$, and this in turn determines
$\sbf_{\mpr}$. We now present a new transformation from MPR models to
periodic tasks as follows. 
\begin{definition}
Given a MPR model $\mpr = \tuple{\Rperiod,\Rcapacity^*,m^*}$, we
define its transformation to a periodic task set $\taskset_{\mpr}$ as
\\
$\taskset_{\mpr} = \{ \STask{1}, \ldots , \STask{m^*} \}$, where \\
$\task_1 = \ldots = \task_{m^*-1} = (\Rperiod, \Rperiod, \Rperiod)$
and \\
$\task_{m^*} = (\Rperiod, \Rcapacity^* - (m^*-1)\Rperiod,
\Rperiod)$.
\label{def:mpr:demand-supply optimal_improved}
\end{definition}
In this definition it is easy to see that the total processor demand
of $\taskset_{\mpr}$ is $\Rcapacity^*$ in every $\Rperiod$ 
time units, with concurrency at most $m^*$. Therefore 
Theorem~\ref{thm:mpr:demand_supply optimal} holds in this case as
well, \emph{i.e.}, if all the deadlines of task set $\taskset_{\mpr}$
are met by some processor supply with concurrency at most $m^*$ at any
time instant, then its supply bound function is lower bounded by
$\sbf_{\mpr}$.

\begin{figure}
\centering
\includegraphics[width=0.85\linewidth]{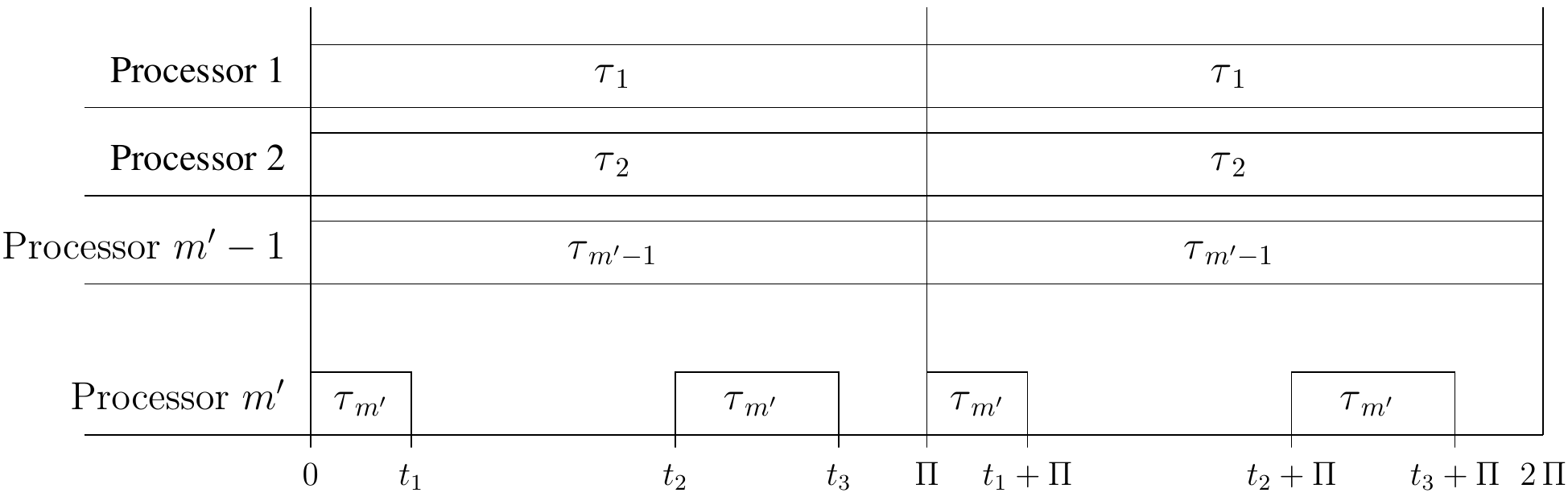}
\caption{McNaughton's schedule of implicit deadline periodic tasks
  with identical periods}
\label{fig:mpr:mcnaughton}
\end{figure}
Now suppose a cluster is abstracted into MPR interface $\GMpr$, which
is then transformed into task set $\taskset_{\mpr}$ using
Definition~\ref{def:mpr:demand-supply optimal_improved}. 
Let $\taskset_{\mpr}$ be scheduled on the multiprocessor platform
using McNaughton's algorithm, along with periodic tasks that all have
period and deadline $\Rperiod$ (implicit deadline task system with
identical periods). Figure~\ref{fig:mpr:mcnaughton} illustrates the 
McNaughton schedule for task set $\taskset_{\mpr}$. As can be seen in
the figure, tasks $\task_1, \ldots , \task_{\Rprocessors-1}$ completely utilize
$\Rprocessors-1$ processors on the platform. Further, every job of task  
$\task_{\Rprocessors}$ is scheduled in an identical manner within its execution
window (intervals $(0, t_1]$ and $(t_2, t_3]$ relative to release time).    
Since this schedule of $\taskset_{\mpr}$ is used as the processor supply for
the underlying MPR interface, $\mpr$ guarantees $\Rcapacity$ processor 
units in any time interval of length $\Rperiod$, $2 \Rcapacity$
processor units in any time interval of length $2 \Rperiod$, and so
on. In other words, the blackout interval of $\sbf_{\mpr}$ (described
in Section~\ref{sec:mpr:multi-core_resource_models}) reduces to
zero. The resulting $\sbf$ is plotted in
Figure~\ref{fig:mpr:sbf_graph_optimal} and it is given by the
following equation.
\begin{equation}
\sbf_{\mpr}(t) = \left \lfloor \frac{t}{\Rperiod} \right \rfloor
\Rcapacity + \left (t - \left \lfloor \frac{t}{\Rperiod} \right \rfloor
\Rperiod \right)\Rprocessors - \min \left \{ t - \left \lfloor \frac{t}{\Rperiod}
\right \rfloor \Rperiod, \Rprocessors \Rperiod - \Rcapacity \right \} 
\label{eqn:mpr:sbf_MPR_optimal}
\end{equation}
\begin{figure}
\centering
\includegraphics[width=0.8\linewidth]{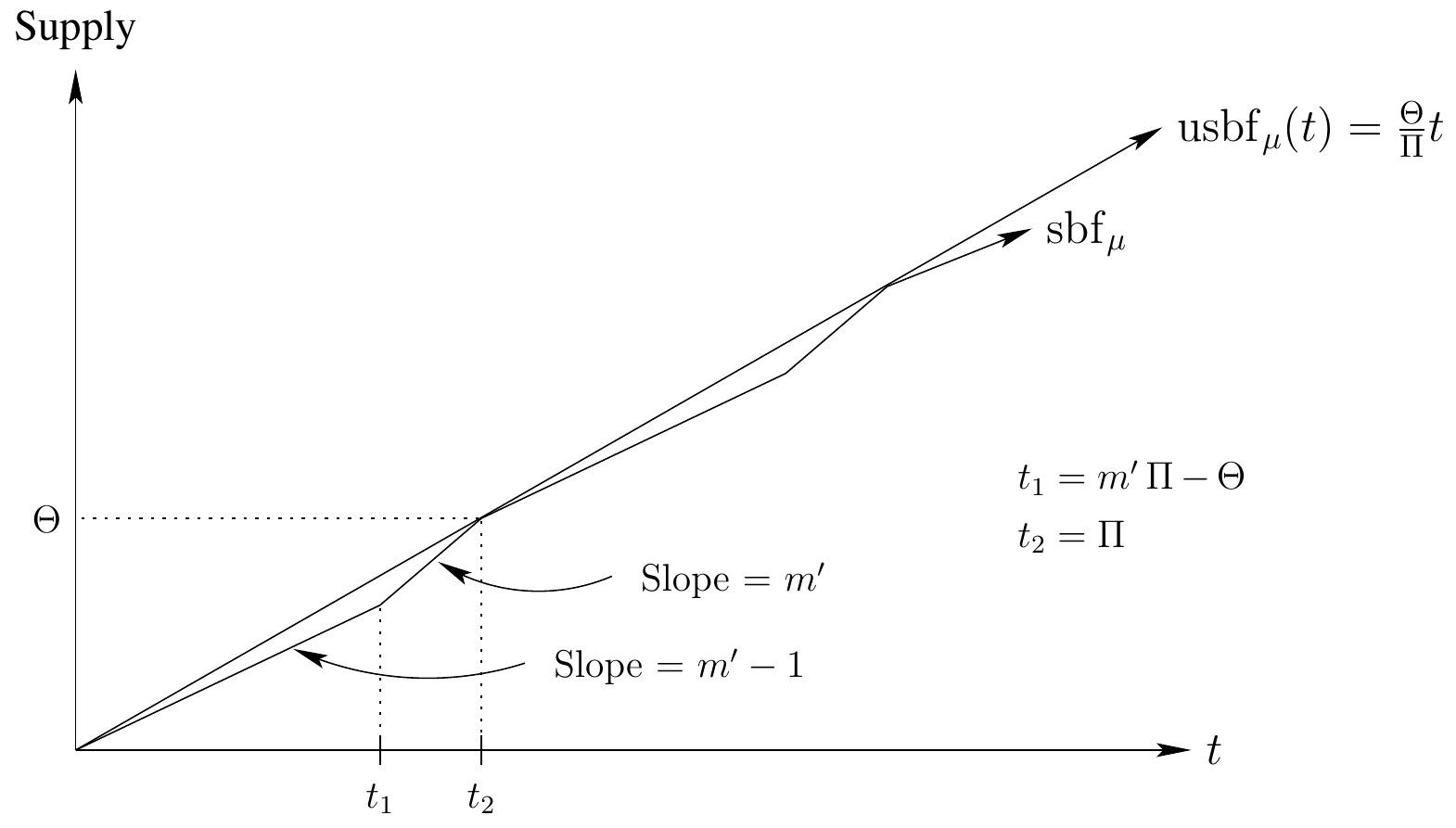}
\caption{Improved $\sbf_{\mpr}$ and its linear upper bound $\usbf_{\mpr}$}
\label{fig:mpr:sbf_graph_optimal}
\end{figure}

$\sbf_{\mpr}$ guarantees $\Rcapacity$ resource units in any time 
interval of length $\Rperiod$. Then a processor supply with supply
bound function equal to $\sbf_{\mpr}$ can successfully schedule task
set $\taskset_{\mpr}$. Thus we have eliminated the resource overhead 
that was present in the previous transformation given in
Definition~\ref{def:mpr:demand-supply optimal}.    

Now consider the schedulability condition for cluster $\component$ given by
Equation~\eqref{eqn:mpr:EDF schedulability} in
Theorem~\ref{thm:mpr:EDF schedulability}. This equation needs to be
evaluated for all values of $A_k$ up to the bound given in
Theorem~\ref{thm:mpr:Ak_bound} and for all tasks $\task_k$ in cluster
$\component$. In this equation it is easy to see that $\dem(A_k + 
\Tdeadline_k, \Rprocessors)$ increases by at most $\Rprocessors-1$ for
every unit increase in $A_k$, as long as $A_k + 1 + \Tdeadline_k$ does not
coincide with the release or deadline of some task in cluster
$\component$. In other words, $\dem(A_k + 1 + \Tdeadline_k,
\Rprocessors) \leq \dem(A_k + \Tdeadline_k, \Rprocessors) +
\Rprocessors-1$, whenever $A_k + 1 + \Tdeadline_k$ is not equal to $l
\Tperiod_i$ or $l \Tperiod_i + \Tdeadline_i$ for any $l$ and $i$
(denoted as property \emph{bounded increase}). This is because over
such unit increases in $A_k$, $\Rprocessors \Tcapacity_k$ and each 
$\hat{I}_{i,2}$ remain constant and
$\sum_{i: i \in L_{(\Rprocessors-1)}} (\bar{I}_{i,2} - 
\hat{I}_{i,2})$ increases by at most $\Rprocessors-1$. 
However $\sbf_{\mpr}$ increases by at least $\Rprocessors-1$ over
each unit time interval (see
Figure~\ref{fig:mpr:sbf_graph_optimal}). Therefore to generate
interface $\mpr$, it is sufficient to evaluate
Equation~\eqref{eqn:mpr:EDF schedulability} at only those 
values of $A_k$ for which $A_k + \Tdeadline_k$ is equal to $l
\Tperiod_i$ or $l \Tperiod_i + \Tdeadline_i$ for some $l$ and $i$.
Now suppose period $\Rperiod$ of $\mpr$ is equal to the
$\GCD$ (greatest common divisor) of the periods and deadlines of all
the tasks in cluster $\component$. Then all the required evaluations
of Equation~\eqref{eqn:mpr:EDF schedulability} will occur at time
instants $t$ for which $\sbf_{\mpr}(t) = \usbf_{\mpr}(t) =
\frac{\Rcapacity}{\Rperiod}t$ (see
Figure~\ref{fig:mpr:sbf_graph_optimal}). In other words, the
right hand side of Equation~\eqref{eqn:mpr:EDF schedulability} can be
replaced with $\frac{\Rcapacity}{\Rperiod}t$. This means that the
resource bandwidth of the resulting interface $\mpr$
($\frac{\Rcapacity}{\Rperiod}$) will be equal to the schedulability
load, $\max_k \max_{A_k} \frac{\dem(A_k+D_k,\Rprocessors)}{A_k+D_k}$,
of cluster $\component$. Thus we have eliminated the resource
overhead that was previously present in the cluster abstraction
process.

We now summarize the contributions of this section. The following
theorem, which is a direct consequence of the above discussions, states
the fundamental result of this section. This theorem states that our
improved virtual-clustering framework does not incur any resource
overheads in transforming MPR interfaces to periodic tasks or in
scheduling the transformed tasks on the multiprocessor platform.   
\begin{theorem}
Consider MPR interfaces $\mpr_1 = \tuple{\Rperiod, \Rcapacity_1,
  \Rprocessors_1} , \ldots , \mpr_p = \tuple{\Rperiod, \Rcapacity_p,
  \Rprocessors_p}$. Suppose they are transformed to
periodic tasks using Definition~\ref{def:mpr:demand-supply
  optimal_improved}. McNaughton's algorithm can successfully schedule
the transformed tasks on $m$ identical, unit-capacity processors
if and only if,
\begin{equation*}
\sum_{i = 1}^p \frac{\Rcapacity_i}{\Rperiod} \leq m 
\end{equation*}
\label{thm:mpr:optimal_inter_cluster_scheduling}
\end{theorem}

Suppose (1) we want to schedule a constrained deadline sporadic task
set $\taskset$ using virtual clusters on $m$ identical, unit-capacity
processors, (2) task-cluster mapping is given, and (3) each
intra-cluster scheduler is such that the corresponding schedulability
condition satisfies \emph{bounded increase} property described above
(\emph{e.g.}, \GEDF). Let (1) each virtual cluster be abstracted into
an MPR interface whose period $\Rperiod$ is equal to the $\GCD$ of the 
periods and deadlines of all the tasks in $\taskset$, (2) these interfaces
be transformed into periodic tasks using
Definition~\ref{def:mpr:demand-supply optimal_improved}, and (3) these
periodic tasks be scheduled on the multiprocessor platform using
McNaughton's algorithm. Then, in addition to the results stated in 
Theorem~\ref{thm:mpr:optimal_inter_cluster_scheduling}, the resource
bandwidth of each MPR interface will be equal to the schedulability
load of the corresponding cluster.
  
\subsection{Virtual clustering of implicit deadline task systems}
\label{sec:mpr:implicit_deadline_virtual_cluster}

In this section we propose two virtual cluster-based scheduling algorithms
for implicit deadline sporadic task sets. We consider the problem of scheduling 
an implicit deadline sporadic task set $\taskset = \{ \task_1 =
(\Tperiod_1, \Tcapacity_1, \Tperiod_1), \ldots , \task_n =
(\Tperiod_n, \Tcapacity_n, \Tperiod_n) \}$ on $m$ identical,
unit-capacity processors. We first present a new virtual-clustering technique
that is optimal like the well known Pfair algorithm~\citep{BCP96}, but
unlike Pfair, has a non-trivial bound on the number of
preemptions. The second technique extends the well known algorithm
\USEDF$\{m/(2m-1)\}$~\citep{SrBa02} with virtual clusters. We show that
the presently known processor utilization bound of \USEDF$\{m/(2m-1)\}$
can be improved by using virtual clusters.  

\subsubsection{VC-IDT scheduling algorithm} 
\label{sec:VC-IDT}

In VC-IDT (\emph{Virtual Clustering - Implicit Deadline Tasks}) scheduling
algorithm we consider a trivial task-processor mapping that assigns each
task $\task_i \in \taskset$ to its own virtual cluster $\component_i$
having one processor. Since each cluster has only one processor, we
assume that each cluster uses \EDF\ for intra-cluster
scheduling\footnote{Since each cluster also has only one task, any
  work conserving algorithm can be used for intra-cluster scheduling.}.
Each cluster $\component_i$ is abstracted into a MPR interface
$\mpr_i = \tuple{\Rperiod, \Rcapacity_i, 1}$, where $\Rperiod$ is
equal to the $\GCD$ of $\Tperiod_1, \ldots , \Tperiod_n$ and
$\Rcapacity_i/\Rperiod = \Tcapacity_i/\Tperiod_i$. Further, each
interface $\mpr_i$ is transformed into periodic tasks using
Definition~\ref{def:mpr:demand-supply optimal_improved} and the
resulting task set is scheduled on the 
multiprocessor platform using McNaughton's algorithm. The following
theorem proves that VC-IDT is an optimal algorithm for scheduling
implicit deadline sporadic task systems on identical, unit-capacity
multiprocessor platforms.
\begin{theorem}
Consider sporadic tasks $\taskset = \{
\task_1 = (\Tperiod_1, \Tcapacity_1, \Tperiod_1), \ldots , \task_n =
(\Tperiod_n, \Tcapacity_n, \Tperiod_n) \}$. A necessary and sufficient
condition to guarantee that $\taskset$ is schedulable on $m$ identical,
unit-capacity processors using VC-IDT algorithm is 
\begin{equation}
\sum_{i = 1}^n \frac{\Tcapacity_i}{\Tperiod_i} \leq m
\label{eqn:mpr:vc-idt_schedulability}
\end{equation} 
\label{thm:mpr:vc-idt_schedulability}
\end{theorem}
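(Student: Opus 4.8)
The plan is to decouple the argument into an intra-cluster guarantee and an inter-cluster guarantee, and then invoke the two results already established for the improved framework. First I would record the setup: in VC-IDT each cluster $\component_i$ contains the single implicit-deadline task $\task_i = (\Tperiod_i,\Tcapacity_i,\Tperiod_i)$ and is abstracted into $\mpr_i = \tuple{\Rperiod,\Rcapacity_i,1}$ with $\Rperiod = \GCD(\Tperiod_1,\ldots,\Tperiod_n)$ and $\Rcapacity_i = \Rperiod\,\Tcapacity_i/\Tperiod_i$. Since here $m^*=1$, Definition~\ref{def:mpr:demand-supply optimal_improved} maps $\mpr_i$ to the single periodic task $(\Rperiod,\Rcapacity_i,\Rperiod)$, so the complete collection of transformed tasks is $\{(\Rperiod,\Rcapacity_i,\Rperiod)\}_{i=1}^n$, all sharing period and deadline $\Rperiod$.

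For the \emph{inter-cluster} step I would apply Corollary~\ref{cor:mpr:mcnaughton_optimal_MPR} (equivalently Theorem~\ref{thm:mpr:optimal_inter_cluster_scheduling}) to this transformed collection: McNaughton's algorithm schedules it on $m$ processors if and only if $\sum_{i=1}^n \Rcapacity_i \le m\Rperiod$ and $\Rcapacity_i \le \Rperiod$ for every $i$. The per-task condition holds automatically, since $\Tcapacity_i \le \Tperiod_i$ forces $\Rcapacity_i = \Rperiod\,\Tcapacity_i/\Tperiod_i \le \Rperiod$, and dividing the summed condition by $\Rperiod$ turns it into exactly $\sum_i \Tcapacity_i/\Tperiod_i \le m$, i.e. Equation~\eqref{eqn:mpr:vc-idt_schedulability}. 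This single equivalence delivers both the necessity of the bound (if it fails, the transformed tasks are unschedulable by McNaughton and some cluster is starved of its required supply, forcing a deadline miss) and the inter-cluster half of sufficiency: when the bound holds, each cluster $\component_i$ receives a supply whose supply bound function is at least the improved $\sbf_{\mpr_i}$ of Equation~\eqref{eqn:mpr:sbf_MPR_optimal}, by Theorem~\ref{thm:mpr:demand_supply optimal}.

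It then remains to verify the \emph{intra-cluster} guarantee: that under a supply matching $\sbf_{\mpr_i}$ the lone task $\task_i$ never misses a deadline. The key observation is that $\Rperiod = \GCD(\Tperiod_1,\ldots,\Tperiod_n)$ divides $\Tperiod_i$, so evaluating Equation~\eqref{eqn:mpr:sbf_MPR_optimal} at $t=\Tperiod_i$ with $m^*=1$ makes the fractional remainder term vanish and yields $\sbf_{\mpr_i}(\Tperiod_i) = (\Tperiod_i/\Rperiod)\Rcapacity_i = \Tcapacity_i$. Hence every interval of length $\Tperiod_i$—in particular the scheduling window $[r,r+\Tperiod_i)$ of each sporadically released job, for any offset $r$—receives at least $\Tcapacity_i$ units of concurrency-one supply. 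Because $\component_i$ holds only this one task, which requires $\Tcapacity_i$ sequential units, the concurrency-one supply matches the sequentiality requirement exactly while the window of length $\Tperiod_i \ge \Tcapacity_i$ leaves enough room, so \EDF\ (indeed any work-conserving rule) completes every job by its deadline.

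The one step I expect to require real care is this last, intra-cluster verification: establishing the tight identity $\sbf_{\mpr_i}(\Tperiod_i)=\Tcapacity_i$ from the divisibility $\Rperiod \mid \Tperiod_i$, and then arguing that a supply that is only \emph{exactly} $\Tcapacity_i$ per window still suffices for arbitrarily offset sporadic releases. The inter-cluster half is essentially immediate once the transformed tasks are seen to share a common period, since it reduces directly to the already-proved optimality of McNaughton's algorithm.
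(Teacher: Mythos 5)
Your proposal is correct and takes essentially the same route as the paper's own proof: the inter-cluster half invokes Theorem~\ref{thm:mpr:optimal_inter_cluster_scheduling} on the equal-period transformed tasks to obtain the if-and-only-if utilization condition, and the intra-cluster half observes that, because $\Rperiod$ divides each $\Tperiod_i$, the improved $\sbf_{\mpr_i}$ of Equation~\eqref{eqn:mpr:sbf_MPR_optimal} guarantees $\Tcapacity_i$ units of concurrency-one supply in every interval of length $\Tperiod_i$, which suffices for the lone task under \EDF. Your additional checks (the per-task condition $\Rcapacity_i \leq \Rperiod$ and the explicit divisibility computation) simply spell out what the paper dismisses as easy to see.
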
 
\begin{proof}
In VC-IDT each virtual cluster $\component_i$, comprising of task
$\task_i$, is abstracted to interface $\mpr_i =
\tuple{\Rperiod, \Rcapacity_i, 1}$, where $\Rperiod$ is 
equal to the $\GCD$ of $\Tperiod_1, \ldots , \Tperiod_n$ and
$\frac{\Rcapacity_i}{\Rperiod} =
\frac{\Tcapacity_i}{\Tperiod_i}$. The interface set
$\mpr_1, \ldots , \mpr_n$, all having identical periods, are then
transformed to periodic tasks using
Definition~\ref{def:mpr:demand-supply optimal_improved} and scheduled
on the platform using McNaughton's algorithm. Therefore, from
Theorem~\ref{thm:mpr:optimal_inter_cluster_scheduling}, we get that 
this interface set is schedulable on the multiprocessor platform if
and only if,
\begin{align*}
& \sum_{i = 1}^n \frac{\Rcapacity_i}{\Rperiod} \leq m \\
\Rightarrow & \sum_{i = 1}^n \frac{\Tcapacity_i}{\Tperiod_i} \leq m
\end{align*}

To prove this theorem we then need to show that for each $i$,
interface $\mpr_i$ can schedule cluster $\component_i$. $\component_i$
comprises of sporadic task $\task_i$ and uses \EDF\
scheduler. Therefore any processor supply that can guarantee
$\Tcapacity_i$ processor units in all time intervals of length
$\Tperiod_i$ can be used to schedule $\task_i$. But from the $\sbf$
of model $\mpr_i$ (Equation~\eqref{eqn:mpr:sbf_MPR_optimal}), it is
easy to see that $\mpr_i$ guarantees $\Tcapacity_i$ processor units in
any time interval of length $\Tperiod_i$. This proves the theorem.   
\qed
\end{proof}

Equation~\eqref{eqn:mpr:vc-idt_schedulability} is known to be a
necessary and sufficient feasibility condition for scheduling implicit
deadline sporadic task systems on $m$ identical, unit-capacity
processors~\citep{SrAn02}. Hence VC-IDT is an optimal scheduling
algorithm for this problem domain. The other known optimal schedulers
for this problem, to the best of our knowledge, are the PD$^2$
Pfair/ERfair algorithm~\citep{SrAn02} and the task-splitting
algorithm~\citep{AnBl08}.

PD$^2$ algorithm is known to incur a high number of preemptions in order
to guarantee P-fairness/ER-fairness, because fairness is a stricter
requirement than deadline satisfaction. It can potentially incur $m$
preemptions in every time unit, which is the maximum possible on this
multiprocessor platform. In contrast, the number of preemptions
incurred by VC-IDT has a non-trivial upper bound which can be
explained as follows. When interfaces $\mpr_1, \ldots , \mpr_n$ are 
scheduled using McNaughton's algorithm (after being transformed into
periodic tasks), there are at most $m-1$ of them that use more than
one processor. Each such interface $\mpr_i$ is preempted once in every  
$\Rperiod$ time units and this may result in a preemption in the
execution of task $\task_i$. Each of the other $n-(m-1)$ tasks may
also experience preemption once in every $\Rperiod$ time units,
because the execution requirements of a job of this task cannot be
entirely satisfied by a single job of the corresponding
interface. The entire sporadic task set will thus incur at most $n$
preemptions in every $\Rperiod$ time units. Therefore when
$\Rperiod$, the GCD of task periods, is very small VC-IDT does not
offer any advantage over PD$^2$ algorithm. This can happen for
instance even if two task periods are co-prime (the GCD in this case 
is one). However, in real-world systems, it has been observed that
task periods are typically harmonic to (multiples of) each other. For
example, harmonic task periods can be found in avionics real-time
applications; see ARINC-653 standards~\citep{arinc653} and sample
avionics workloads in the appendix of this technical
report~\citep{techreport_rtcsa}. In this case, the GCD of task periods 
is equal to the smallest task period (typically a few milliseconds as
indicated by the workloads in~\citep{techreport_rtcsa}), and then
VC-IDT incurs far fewer preemptions than Pfair/ERfair algorithms. It
is worth noting that although the BoundaryFair algorithm~\citep{ZMM03}
incurs fewer preemptions than VC-IDT, it is only optimal for
scheduling periodic (not sporadic) task systems.

The task splitting algorithm proposed by Andersson and
Bletsas~\citep{AnBl08} has also been shown to be optimal for implicit 
deadline sporadic task systems (see Theorem~3
in~\citep{AnBl08}). Suppose $jobs(t)$ denotes the maximum number of
jobs that will be released by the task system in any time interval of
length $t$. Then this algorithm is known to incur at most
$\frac{3mt}{\GCD} + 2m + jobs(t)$ number of preemptions, where $\GCD$
denotes the greatest common divisor of task periods (derived from
Theorems~2 and~3 in~\citep{AnBl08}). In contrast, 
VC-IDT algorithm incurs at most $\frac{n t}{\GCD}$ number of
preemptions. Clearly, our algorithm outperforms the task splitting
approach whenever $n < 3m$. When $n > 3m$, either algorithm can
incur fewer preemptions depending on the value of $\GCD$ and the
relation between task periods. The runtime complexity of the
dispatcher under task splitting is the same as that of partitioned
$\EDF$ (roughly logarithmic in the number of tasks for every
scheduling decision). In contrast, under VC-IDT, the entire interface
schedule based on McNaughton's algorithm can be generated and stored
offline for intervals of length $\GCD$. Therefore at runtime the
tasks can be scheduled in constant time. This vastly improved runtime 
complexity at the expense of increased storage requirements is
particularly useful in embedded systems, where cheaper ROM and Flash
memory is still preferred over the more expensive RAM (for instance,
MICAz, the sensor node from crossbow, has 512k of Flash memory whereas
only 4k of RAM~\citep{micaz}). Finally, a practical limitation of the
task splitting approach is that they do not provide any error
isolation mechanism, \emph{i.e.}, a task that executes for more than
its stated worst-case execution time can cause other tasks in the
system to miss deadlines. In contrast, VC-IDT provides automatic
error isolation, because a mis-behaving task will never get more
processor share than already provided by its MPR interface.

\subsubsection{Virtual clustering for US--EDF$\{m/(2m-1)\}$}

\USEDF$\{m/(2m-1)\}$, proposed by Srinivasan and Baruah~\citep{SrBa02}, is
a global scheduling algorithm for implicit deadline sporadic task
systems. Under this algorithm each task with utilization
($\frac{\Tcapacity}{\Tperiod}$) greater than $\frac{m}{2m-1}$ is given
the highest priority, and the remaining tasks are scheduled based on
\GEDF. It has been shown that this algorithm has a processor
utilization bound of $\frac{m^2}{2m-1}$, \emph{i.e.}, any sporadic
task set with total utilization ($\sum_i
\frac{\Tcapacity_i}{\Tperiod_i}$) at most $\frac{m^2}{2m-1}$ can be
scheduled by \USEDF$\{m/(2m-1)\}$ on $m$ identical, unit-capacity
processors~\citep{SrBa02}.

Now consider the following virtual cluster-based \USEDF$\{m/(2m-1)\}$
scheduling algorithm. Let each task with utilization greater than
$\frac{m}{2m-1}$ be assigned to its own virtual cluster having one
processor and using \EDF\ (denoted as \emph{high utilization
  cluster}), and all the remaining tasks be assigned to a single
cluster using \GEDF\ (denoted as \emph{low utilization cluster}). Each
cluster is abstracted to a MPR interface such that period $\Rperiod$
of each interface is equal to the $\GCD$ of $\Tperiod_1, \ldots ,
\Tperiod_n$. Each high utilization 
cluster is abstracted to interface $\tuple{\Rperiod, \Rcapacity,
  1}$, where $\frac{\Rcapacity}{\Rperiod}$ is equal to the utilization of
task in the cluster (Theorem~\ref{thm:mpr:vc-idt_schedulability}
proves correctness of this abstraction). The low utilization cluster
is abstracted to interface $\mpr_{low} = \tuple{\Rperiod,
  \Rcapacity',\Rprocessors}$, where $\Rcapacity'$ and $\Rprocessors$ are
generated using techniques in 
Sections~\ref{sec:mpr:interface_generation} 
and~\ref{sec:mpr:improved_virtual_cluster}. Finally, these interfaces
are transformed to periodic tasks using
Definition~\ref{def:mpr:demand-supply optimal_improved} and the
resulting task set is scheduled on the multiprocessor platform using
McNaughton's algorithm.
 
We now derive a utilization bound for the virtual
cluster-based \USEDF$\{m/(2m-1)\}$ algorithm described above. 
Suppose $\alpha$ denotes the total utilization of all the high utilization
tasks, \emph{i.e.}, the total resource bandwidth of all the MPR interfaces that
represent high utilization clusters is $\alpha$. Since all the
interfaces that we generate have identical periods, from 
Theorem~\ref{thm:mpr:optimal_inter_cluster_scheduling} we get that the
maximum resource bandwidth available for $\mpr_{low}$ is
$m-\alpha$. This means that $\frac{\Rcapacity'}{\Rperiod} \leq
m-\alpha$ and $\alpha \leq m$ are necessary and sufficient conditions
to guarantee schedulability of task set $\taskset$ under virtual
cluster-based \USEDF$\{m/(2m-1)\}$.

Suppose $\alpha > m-1$. Then $m - \alpha < 1$ and $\Rprocessors \leq
1$. The last inequality can be explained as follows. $\Rprocessors =
\left \lceil \frac{\Rcapacity'}{\Rperiod} \right \rceil$ because
$\Rprocessors$ is the smallest number of processors upon which the low
utilization cluster 
is schedulable. Then $\frac{\Rcapacity'}{\Rperiod} \leq m-\alpha <
1$ implies $\Rprocessors \leq 1$. In this case the low utilization cluster is
scheduled on a uniprocessor platform and \GEDF\ reduces to \EDF, an
optimal uniprocessor scheduler with utilization bound $m -
\alpha$. Therefore virtual cluster-based \USEDF$\{m/(2m-1)\}$ is
optimal whenever $\alpha > m-1$, \emph{i.e.}, it can successfully
schedule task set $\taskset$ if $\sum_{i = 1}^n
\frac{\Tcapacity_i}{\Tperiod_i} \leq m$.

Now suppose $\alpha \leq m-1$. To derive the utilization
bound in this case, we use a utilization bound of \GEDF\ that was
developed by Goossens \emph{et al.}~\citep{GFB03}. As per this
bound $\mpr_{low}$ can support a low utilization cluster whose 
total task utilization is upper bounded by
$\left(\frac{\Rcapacity'}{\Rperiod}-\left(\frac{\Rcapacity'}{\Rperiod}-1 
  \right)U_{max}\right)$, where $U_{max}$ is the maximum utilization 
of any task in the cluster. Therefore, in this case, the
utilization bound of virtual cluster-based \USEDF$\{m/(2m-1)\}$ is 
\begin{align*}
& \alpha + \left(\frac{\Rcapacity'}{\Rperiod} -
\left(\frac{\Rcapacity'}{\Rperiod}-1 \right)U_{max}\right) \\
& = \alpha + \left(m - \alpha -
\left(m - \alpha - 1 \right)U_{max}\right)
\end{align*}
Since $m-\alpha \geq 1$, the bound in the above equation is 
minimized when $U_{max}$ is maximized. Substituting $U_{max} =
\frac{m}{2m-1}$ (largest utilization of any task in the low
utilization cluster), we get a utilization bound of 
\begin{align*}
& \alpha + \left(m-\alpha \left (1 - \frac{m}{2m-1} \right) +
\frac{m}{2m-1} \right) \\
& = \frac{\alpha (2m-1) + (m-\alpha )(m-1) + m}{2m-1} \\
& \geq \frac{\alpha (2m-1) + (m-\alpha )(m-1) + m}{2m-1} \\
& = \frac{m^2 + \alpha m}{2m-1}
\end{align*}
Thus the processor utilization bound of virtual cluster-based
\USEDF$\{m/(2m-1)\}$ is $\min\left \{ m, \frac{m^2 + \alpha m}{2m-1}
\right \}$. It 
is easy to see that whenever $\alpha > 0$, this bound is greater than 
the presently known utilization bound of $\frac{m^2}{2m-1}$ for
\USEDF$\{m/(2m-1)\}$. This shows that virtual clustering, unlike the
earlier \USEDF$\{m/(2m-1)\}$ algorithm, allows one to use the leftover
processing capacity from high utilization clusters for scheduling
tasks in the low utilization cluster. It also shows that the
improvement in utilization bound is achievable even when clusters are
scheduled on the platform using non-trivial abstractions such as MPR
models. This gain however comes at a cost; since $\Rperiod$ is equal
to the $\GCD$ of task periods, the resulting schedule can potentially
incur more preemptions when compared to the original algorithm.

\section{Conclusions}
\label{sec:mpr:conclusions}

In this paper we have considered the idea of cluster-based scheduling
on multiprocessor platforms as an alternative to existing partitioned
and global scheduling strategies. Cluster-based scheduling can be
viewed as a two-level scheduling strategy. Tasks in a cluster
are globally scheduled within the cluster (intra-cluster scheduling)
and clusters are then scheduled on the multiprocessor
platform (inter-cluster scheduling). We have
further classified clustering into physical (one-to-one) and virtual
(many-to-many), depending on the mapping between clusters and processors
on the platform. Virtual clustering is more general and less sensitive
to task-processor mappings than physical clustering.

Towards supporting virtual cluster-based scheduling, we have
developed techniques for hierarchical scheduling in this paper.
Resource requirements and concurrency constraints of tasks within
each cluster are first abstracted into MPR interfaces. These
interfaces are then transformed into periodic tasks which are used
for inter-cluster scheduling. We have also developed an efficient
technique to minimize processor utilization of individual clusters
under \GEDF. Finally, we developed a new optimal scheduling algorithm
for implicit deadline sporadic task systems, and also illustrated the
power of general task-processor mappings by virtualizing
\USEDF$\{m/(2m-1)\}$ algorithm.

We only focused on \GEDF\ for intra-cluster and McNaughton's
for inter-cluster scheduling. However, our approach of isolating the
inter-cluster scheduler from task-level concurrency constraints is
general, and can be adopted to other scheduling 
algorithms as well. Moreover, this generality also means that our
technique enables clusters with different intra-cluster schedulers to
be scheduled on the same platform. It would be interesting to 
generalize this framework by including other intra and inter-cluster
scheduling algorithms, with an aim to solve some open problems in
multiprocessor scheduling.

\begin{acknowledgements} 
The authors are grateful to the various anonymous reviewers of this
work. In particular, we would like to thank the reviewer who pointed
out the mistake in our $\sbf_{\mpr}$ formulation.
\end{acknowledgements}


\bibliographystyle{plainnat}      
\bibliography{ecrts08_journal}   

\end{document}